\newtheorem{theorem}{Theorem}
\newtheorem{claim}{Claim}
\newtheorem{proposition}{Proposition}
\newtheorem{lemma}{Lemma}
\newtheorem{definition}{Definition}
\DeclareSymbolFont{bbold}{U}{bbold}{m}{n}
\DeclareSymbolFontAlphabet{\mathbbold}{bbold}
\newcommand\blfootnote[1]{%
	\begingroup
	\renewcommand\thefootnote{}\footnote{#1}%
	\addtocounter{footnote}{-1}%
	\endgroup
}
\newcommand{\cI}{\mathcal{I}}
\newcommand{\sI}{\mathcal{I}}
\newcommand{\sS}{\mathcal{S}}
\DeclareMathOperator*{\argmax}{arg\,max}
\DeclareMathOperator*{\argmin}{arg\,min}
\begin{document}

\title{Perturbation-Resilient Trades for Dynamic Service Balancing}

\author{%
  \IEEEauthorblockN{
  \textbf{Jin Sima}, \textbf{Chao Pan}  and \textbf{Olgica Milenkovic}}
   \IEEEauthorblockA{Department of Electrical and Computer Engineering, University of Illinois Urbana-Champaign, USA \\\texttt{\{jsima,chaopan2,milenkov\}@illinois.edu}
   }
}

\maketitle

\begin{abstract} 
A combinatorial trade is a pair of sets of blocks of elements that can be exchanged while preserving relevant subset intersection constraints. The class of balanced and swap-robust minimal trades was proposed in~\cite{pan2022balanced} for exchanging blocks of data chunks stored on distributed storage systems in an access- and load-balanced manner. More precisely, data chunks in the trades of interest are labeled by popularity ranks and the blocks are required to have both balanced overall popularity and stability properties with respect to swaps in chunk popularities. The original construction of such trades relied on computer search and paired balanced sets obtained through iterative combining of smaller sets that have provable stability guarantees. To reduce the substantial gap between the results of prior approaches and the known theoretical lower bound, we present new analytical upper and lower bounds on the minimal disbalance of blocks introduced by limited-magnitude popularity ranking swaps. Our constructive and near-optimal approach relies on pairs of graphs whose vertices are two balanced sets with edges/arcs that capture the balance and potential balance changes induced by limited-magnitude popularity swaps. In particular, we show that if we start with carefully selected  balanced trades and limit the magnitude of rank swaps to one, the new upper and lower bound on the maximum block disbalance caused by a swap only differ by a factor of $1.07$. We also extend these results for larger popularity swap magnitudes.
\end{abstract}
\blfootnote{This work was supported by the NSF grant CCF 1816913. Part of the works were presented at the International Symposiums on Information Theory 2022 and 2023.
}

\section{Introduction}
\label{sec:introduction}

In distributed storage systems, service (access) control methods are used to balance out read requests to servers and their load distributions, and thereby prevent service time bottlenecks~\cite{dau2018maxminsum,liu2019distcache,chee2020access,anderson2018service,aktacs2021service}. One approach to achieving this goal is to allocate carefully chosen  combinations of files or data chunks to different servers in order to both ensure efficient content reconstruction/regeneration in the presence of disk failures and level the ``average popularity'' (and thereby the access frequency) of files stored on different servers. To perform this repair-enabling and balanced allocation, new families of Steiner systems and related combinatorial designs, termed \emph{MinMax} Steiner systems, were introduced in~\cite{dau2018maxminsum,colbourn2020popularity,colbourn2021egalitarian}. There, in addition to block-intersection constraints common to designs, the authors also considered labels of elements in a block, which were deemed to be representative of their popularity rankings. The task at hand was to identify designs whose blocks have near-uniform sums of labels, or equivalently, close-to-balanced average file popularities. 

In practice, data popularity changes with time, and it is costly to redistribute files across servers after each popularity ranking change, especially when the magnitude of the changes is small. It is thus important to have combinatorial designs for which the discrepancy in the average server popularity scores is tightly restricted even in the presence of  perturbations in data popularity. 

One class of building blocks of designs, known as \emph{combinatorial trades}~\cite{hedayat1990theory,hwang1986structure, billington2003combinatorial}, has been proposed for use in dynamically changing storage systems. A trade is a pair of disjoint sets of blocks where the sets have the same cardinality and all blocks contain the same number of elements from a ground set. Furthermore, any subset of a given size is required to appear the same number of times within the blocks of each of the two constituent sets of blocks. As the name suggests, trades allow for exchanging collections of blocks in one of the two sets that violate certain system constraints, with blocks in the other set, as the latter may not cause violations. 

As an example~\cite{khosrovshahi2009trades}, one possible trade over a set of six elements $\{{1,2,3,4,5,6\}}$, with  blocks of cardinality three, and constraint subset size two, equals
$$\{{\{1,2,3\},\{1,2,4\},\{1,5,6\},\{2,5,6\},\{3,4,5\},\{3,4,6\}\}},$$
$$\{{\{1,2,5\},\{1,2,6\},\{1,3,4\},\{2,3,4\},\{3,5,6\},\{4,5,6\}\}}.$$
Here, each element in $\{{1,2,3,4,5,6\}}$ can be viewed as a data chunk whose value corresponds to its popularity ranking (with $1$ being the most popular and $6$ being the least popular chunk).
If at some point in time each block in the first set is stored on a separate server, it may become desirable to trade the blocks in this set with blocks in the second set, since in the former, the three most popular files are stored together which may lead to server failure due to large access demands. Note also that the servers total popularity scores, equal to the sum of the labels of the elements in their corresponding blocks, differ widely: in the first set of blocks, the lowest score of a block equals $5$ while the highest score equals $13$. This ``disbalance'' is undesirable as it causes server access and load issues.

To address the latter issues, the work~\cite{pan2022balanced,sima2023perturbation} introduced a special family of balanced and popularity-swap-robust (minimal) trades, i.e., trades with equal block sums and small induced popularity discrepancies caused by limited-magnitude popularity rank swaps. The analyses exclusively focused on trades with parameter $(4t,2t,t-1)$, but the proposed approach also  applies to more general parameter settings. The authors showed that the above-described family of  trades can be constructed through a careful selection of ``defining sets'' used in the minimal trades construction from~\cite{hwang1986structure}, and that the total popularity disbalance under popularity swaps of magnitude one scales linearly with $t$. Identifying good "defining sets" is crucial since it strongly influences the popularity discrepancy caused by popularity swaps. However, the "defining sets" problem is analytically difficult and computationally prohibitive 
due to a large combinatorial search space. The largest instance that can be solved by exhaustive search on a desktop computer is of size $4t=24$ (corresponding to $24$ data chunks of different popularity scores). Furthermore, the theoretical lower and upper bounds on the popularity disbalance caused by magnitude-one popularity swaps~\cite{pan2022balanced,sima2023perturbation} are nonnegligible and do not extend to general dynamic popularity-changes.

The contributions of this work are three-fold. First, we significantly improve the lower and upper bounds on the optimal disbalance of the defining sets from~\cite{pan2022balanced,sima2023perturbation}, and thereby nearly close the gap for the case of popularity swaps of magnitude-one. The results are based on a recursive construction (Section~\ref{sec:upper_bound}) of the defining sets, resulting in a popularity disbalance of $\frac{8}{5}(t-\frac{1}{4})$; this is to be compared to the currently best known bound $2t+O(1)$~\cite{pan2022balanced}. Second, we establish a significantly tighter lower bound $\frac{3}{2}(t-\frac{2}{3})$ (Section~\ref{sec:lower_bound}) on the total set discrepancy when compared to the lower bound $\frac{2}{3}(t-\frac{1}{3})$ proved in~\cite{pan2022balanced}. Our proof techniques use new families of paired graphs that describe the disbalance and potential disbalance increase induced by popularity swaps. Third, we extend the above results for popularity swaps of magnitude larger than one and defining sets of arbitrary cardinality.

The paper is organized as follows. In Section~\ref{section:preliminary}, we formally introduce the problem, describe the relevant notation and review known results. Section~\ref{sec:upper_bound} contains a description of the approach used to construct the new upper bound, while Section~\ref{sec:upper_bound} contains the proofs establishing the new lower bound. Extensions of these results are presented in Sections~\ref{sec:greater_pi},\ref{section:pitwo},\ref{section:greater_defining}. 

\section{Preliminaries}
\label{section:preliminary}
For completeness, we recall the formal definition of trades and revisit the construction of minimal trades described in~\cite{hwang1986structure}. 

Let $V=[v]=\{1,\ldots,v\}$ be a set of integers and let $\mathcal{P}_k=\{P:P\subset V,|P|=k\}$ be the set of subsets of $V$ of cardinality $k,$ for some integer $k\le v$. 
\begin{definition}
A $(v,k,t)$ trade is a pair of sets $\{{T^{(1)}, T^{(2)}\}}$,  $T^{(i)} \subseteq \mathcal{P}_k, i=1,2,$ such that $|T^{(1)}| = |T^{(2)}|$ and 
$T^{(1)} \cap T^{(2)} = \emptyset$, satisfying the following property. For any $B_t \in \mathcal{P}_t$, the number of blocks in $T^{(1)}$ and the number of blocks in $T^{(2)}$ that contain $B_t$ is the same.
The volume of the trade equals the number of blocks in $T^{(1)}$, i.e., $|T^{(1)}|$, and for a trade, we require that $|T^{(1)}| = |T^{(2)}|$. A $(v,k,t)$ trade is \emph{minimal} if it has the smallest possible number of blocks. 
\end{definition}
For each $\ell \in \{1, 2, \ldots, n\}$, let 
$$P_{\ell} = \{(i_1, i_1 + 1) \cdots (i_{\ell}, i_{\ell} + 1) \, | \, \text{ s.t. } \forall j\le \ell,\, i_j=1 \bmod 2,\},$$
be a set of permutations over a set of $2n$ elements, i.e., a set of permutations from the symmetric group $\mathbb{S}_{2n};$ in words, for odd positive integers $i_j,$ $(i_1, i_1 + 1)(i_2, i_2 + 1) \ldots (i_{\ell}, i_{\ell} + 1)$ denotes the permutation that swaps $i_j$ with $i_j + 1,$  $j = 1, \ldots, \ell,$ and leaves all other elements fixed. Furthermore, let $P_0 = \{(e)\}$ be the singleton set containing the identity permutation $e$. Also, define 
$$\Delta_{2n} = \bigcup\limits_{\ell \text{ even}} P_{\ell} \quad \text{ and } \quad \bar{\Delta}_{2n} = \bigcup\limits_{\ell \text{ odd}} P_{\ell}.$$
The following result was established in~\cite{hwang1986structure}.
\begin{theorem}\label{thm:hwang}
For all $v \geq k+t+1$, there exists a $(v,k,t)$ trade of volume $2^t$, and this volume is the smallest (minimal) possible volume for the given choice of parameters. Furthermore, there exists a unique family of minimal trades into which every nonminimal trade can be decomposed.
\end{theorem}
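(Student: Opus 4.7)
\medskip
\noindent\textbf{Proof plan.}
My plan is to split Theorem~\ref{thm:hwang} into three pieces and handle each in turn: (i) construct a $(v,k,t)$ trade of volume $2^{t}$ under the hypothesis $v\ge k+t+1$; (ii) prove the matching lower bound $|T^{(1)}|\ge 2^{t}$ for any nontrivial trade with these parameters; (iii) establish the uniqueness of the minimal-trade decomposition.

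For (i), I would work inside the permutation framework set up right before the theorem. Setting $n=t+1$, I pick $t+1$ disjoint pairs $\{a_0,b_0\},\dots,\{a_t,b_t\}\subseteq V$ and a common fixed set $F\subseteq V$ of size $k-t-1$ disjoint from the pairs---exactly what the hypothesis $v\ge k+t+1$ permits. With base block $B_{0}=F\cup\{a_{0},\dots,a_{t}\}$, I define
\[
T^{(1)}=\{\sigma(B_{0}):\sigma\in\Delta_{2n}\},\qquad T^{(2)}=\{\sigma(B_{0}):\sigma\in\bar{\Delta}_{2n}\},
\]
where each $\sigma$ swaps $a_i\leftrightarrow b_i$ on the pair-indices appearing in its transpositions and fixes $F$ pointwise. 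Since $|\Delta_{2n}|=|\bar{\Delta}_{2n}|=2^{t}$ and distinct $\sigma$'s produce distinct blocks, both sides have the required size and are disjoint. The trade condition then reduces to a parity count: for any $t$-subset $B_{t}$, either $B_{t}$ meets both ends of some pair or hits an element outside $F\cup\bigcup_i\{a_i,b_i\}$ (in which case both counts are zero), or the $\sigma$'s sending $B_{0}$ to a block that contains $B_{t}$ form a coset of size $2^{t+1-m}$ with $m\le t$; at least one pair-coordinate is then free, and the coset splits equally between $\Delta_{2n}$ and $\bar{\Delta}_{2n}$.

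For (ii), I would induct on $t$, with the case $t=0$ immediate. The preliminary observation I would record first is that every $(v,k,t)$ trade is automatically a $(v,k,t')$ trade for $0\le t'\le t$, which follows by summing the $t$-subset identity over all $t$-supersets of a fixed $t'$-subset and dividing out the nonzero factor $\binom{k-t'}{t-t'}$. Then, for $t\ge 1$, I pick an element $x$ in the foundation. If $x$ lies in every block, I remove $x$ to obtain a $(v-1,k-1,t)$ trade of the same volume and iterate. Otherwise, the number $d$ of blocks per side containing $x$ satisfies $0<d<N$, where $N$ is the volume. The blocks containing $x$ (with $x$ stripped) give a nontrivial $(v-1,k-1,t-1)$ trade of volume $d$, and the blocks not containing $x$ give a nontrivial $(v-1,k,t-1)$ trade of volume $N-d$; the second reduction is where the $(v,k,t-1)$-trade property is used. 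The inductive hypothesis yields $d\ge 2^{t-1}$ and $N-d\ge 2^{t-1}$, whence $N\ge 2^{t}$.

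For (iii), I would induct on the volume. Given an arbitrary $(v,k,t)$ trade $T$, the plan is to locate inside it a minimal sub-trade $T'$ of volume $2^{t}$ isomorphic to the construction in (i), subtract $T'$ from both sides of $T$, and apply the induction hypothesis to the residue. I expect the extraction step to be the main obstacle: it requires showing that the foundation of any nontrivial trade contains $t+1$ ``swappable'' pairs together with a fixed common part---essentially reversing the construction of (i)---which in turn relies on the foundation lower bound $|F(T)|\ge k+t+1$ and a careful combinatorial analysis of how blocks of $T$ share elements. Once this extraction lemma is in place, induction yields both the decomposition and its uniqueness, since at each stage the sub-trade being removed is canonically determined by $T$.
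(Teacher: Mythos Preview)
Your construction in part~(i) is essentially the paper's proof: both build the trade from the permutation sets $\Delta_{2(t+1)}$ and $\bar{\Delta}_{2(t+1)}$ acting on $t+1$ disjoint pairs together with a fixed common part, and both verify the trade property by the same parity/coset count on $t$-subsets. The only cosmetic difference is that the paper allows the paired sets $S_{2i-1},S_{2i}$ to have arbitrary (equal) cardinalities rather than being singletons $\{a_i\},\{b_i\}$; this extra generality is irrelevant for the existence statement but is what the paper needs later for its balanced-trade application.

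Note, however, that the paper explicitly proves \emph{only} the existence claim (``We provide a simple proof for the first claim\ldots''), deferring the minimality bound and the decomposition statement to the cited source~\cite{hwang1986structure}. Your plan therefore goes further than the paper does. Your part~(ii) is the classical splitting argument and is correct as written: the key point that a $(v,k,t)$ trade is also a $(v,k,t-1)$ trade, together with the derived trades at an element $x$, gives exactly $N\ge 2^{t-1}+2^{t-1}$.

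For part~(iii) there are two issues worth flagging. First, as you yourself say, the extraction lemma---that every nontrivial $(v,k,t)$ trade contains a sub-trade of the constructed form---is the entire content of the structural claim, and your outline does not yet supply it; the actual argument in Hwang's paper proceeds by the same element-splitting induction as in~(ii), showing that a volume-$2^{t}$ trade must arise from the construction, rather than by directly locating $t+1$ swappable pairs. Second, you appear to read the theorem as asserting uniqueness of the \emph{decomposition}; the intended reading is that the \emph{family} of minimal trades is unique (all minimal trades are of the constructed type), and that every trade decomposes into members of this family---the decomposition itself need not be canonical, so the clause ``canonically determined by $T$'' overstates what is being claimed.
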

\begin{IEEEproof}
We provide a simple proof for the first claim as it allows us to introduce the notion of defining sets. 

Let $S_1, S_2, \ldots, S_{2t + 3}$ be a collection of subsets of the set $V$ in which the $2t + 3$ sets are arranged into $t+1$ pairs  
$$(S_1, S_2);\; (S_3, S_4);\; \ldots; (S_{2t+1}, S_{2t+2}),$$ with the addition of one unpaired set $S_{2t+3}$. We henceforth refer to these sets as \emph{defining sets} and the paired sets as \emph{companions}. The defining sets have the following properties: 
\begin{enumerate}
\item $S_i \cap S_j = \emptyset$ for $i \neq j$;
\item $|S_{2i-1}| = |S_{2i}| \ge 1$, for $i = 1, \ldots, t+1$;
\item $\sum\limits_{i = 1}^{t+2} |S_{2i-1}| = k$.
\end{enumerate}
Clearly, $V=\cup_{i}^{2t+3}\, S_i$. 

Next, define $T = \{T^{(1)}, T^{(2)}\}$ where the constituent sets 
 $T^{(1)}$ and $ T^{(2)}$ equal
\begin{align}\label{eq:trades}
   &\{{ (S_{\sigma(1)} \cup S_{\sigma(3)} \cup \ldots \cup S_{\sigma(2t+1)} \cup S_{2t+3}): \sigma \in \Delta_{2t+2} \}},\nonumber\\
   &\{{(S_{\bar{\sigma}(1)} \cup S_{\bar{\sigma}(3)} \cup \ldots \cup S_{\bar{\sigma}(2t+1)} \cup S_{2t+3}): \bar{\sigma} \in \bar{\Delta}_{2t+2}\}},
\end{align}
respectively. Note that we used $(S_{\sigma(1)} \cup S_{\sigma(3)} \cup \ldots \cup S_{\sigma(2t+1)} \cup S_{2t+3})$ and $(S_{\bar{\sigma}(1)} \cup S_{\bar{\sigma}(3)} \cup \ldots \cup S_{\bar{\sigma}(2t+1)} \cup S_{2t+3})$ to denote blocks whose elements represent the unions of the corresponding indexed sets $S$. The two straightforward results below were stated but not formally proved in~\cite{hwang1986structure}; we provide the proofs for completeness.

\begin{claim}
The paired sets in $T$ are minimal with respect to their volume (size), which equals $2^t$~\cite{hwang1986structure}.
\end{claim}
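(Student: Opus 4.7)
The plan is to split the claim into two parts: (i) compute the cardinalities $|T^{(1)}|=|T^{(2)}|=2^t$ directly from the construction, and (ii) show that every $(v,k,t)$ trade must have volume at least $2^t$.

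For (i), each permutation $\sigma\in\Delta_{2t+2}$ is uniquely specified by the even-cardinality subset $I\subseteq\{1,\ldots,t+1\}$ of pair-indices at which $\sigma$ performs the transposition $(2i-1,2i)$. Summing over even $|I|$ therefore gives
\begin{align*}
|\Delta_{2t+2}| \;=\; \sum_{\substack{0 \le \ell \le t+1\\ \ell \text{ even}}}\binom{t+1}{\ell} \;=\; 2^{t},
\end{align*}
and an identical count produces $|\bar\Delta_{2t+2}|=2^t$. Because the defining sets $S_1,\ldots,S_{2t+3}$ are pairwise disjoint, distinct choices of $I$ select different elements from the companion pairs and therefore produce distinct block unions; each such block has cardinality $k$ by property (3) (using $|S_{2i-1}|=|S_{2i}|$, the block size is independent of which member of each companion pair is picked). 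The parity of $|I|$ separates blocks of $T^{(1)}$ from those of $T^{(2)}$, so $T^{(1)}\cap T^{(2)}=\emptyset$ and $|T^{(1)}|=|T^{(2)}|=2^t$.

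For (ii) I would proceed by induction on $t$, with the trivial base $t=0$. For the inductive step, given a $(v,k,t)$ trade $\{T^{(1)},T^{(2)}\}$, I would first establish an element-symmetry lemma: summing the trade identity over the $\binom{k-1}{t-1}$ many $t$-subsets containing a fixed $x$ implies that every $x\in V$ appears in the same number of blocks of $T^{(1)}$ and $T^{(2)}$. Consequently, any element common to all blocks (or to none of the blocks) of one side has the same status on the other side, and such elements can be stripped from the ground set without changing the volume. After this reduction, choose any $x$ and split each $T^{(i)}$ into the subfamily $T^{(i)}_x$ of blocks containing $x$ (with $x$ deleted) and the subfamily $T^{(i)}_{\bar x}$ of blocks avoiding $x$. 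The pair $\{T^{(1)}_x,T^{(2)}_x\}$ is directly a $(v-1,k-1,t-1)$ trade, since for any $(t-1)$-subset $B'\subseteq V\setminus\{x\}$ its counts coincide with the counts of $B'\cup\{x\}$ in the original trade. For the pair $\{T^{(1)}_{\bar x},T^{(2)}_{\bar x}\}$ one invokes the downward-propagation fact that equality of $t$-subset counts forces equality of $s$-subset counts for every $s\le t$ (a binomial-convolution identity analogous to the one used above), so the complementary pair is a $(v-1,k,t-1)$ trade. Both halves are nonempty by the stripping reduction, so the inductive hypothesis gives each volume $\ge 2^{t-1}$, and hence $|T^{(1)}|\ge 2^t$.

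The main obstacle I anticipate is the downward-propagation step used to conclude that the ``avoid-$x$'' half is itself a trade of one lower level: it requires an inclusion-exclusion argument to reduce the needed $(t-1)$-subset equalities to the given $t$-subset equalities, and it is the ingredient without which the induction collapses. A secondary but easier subtlety is ensuring both halves of the split are nonempty, which is handled by the element-symmetry lemma together with iterated stripping of universally present or universally absent elements from the ground set.
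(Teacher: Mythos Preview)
Your part (i) is exactly what the paper does: count $|\Delta_{2t+2}|$ and $|\bar\Delta_{2t+2}|$ as alternating binomial sums equal to $2^t$, and note that disjointness of the defining sets makes distinct index-subsets yield distinct blocks. That portion coincides with the paper's proof.

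Your part (ii), however, goes well beyond what the paper actually proves. The paper does \emph{not} supply an argument for minimality in this Claim; it simply computes the volume as $2^t$ and relies on the surrounding Theorem (cited from Hwang) for the statement that $2^t$ is the smallest possible volume. Your inductive argument---split at an element $x$ into the ``contain-$x$'' and ``avoid-$x$'' subfamilies, observe each is a trade of level $t-1$, and apply the inductive bound---is precisely the classical proof of the trade lower bound and is correct as outlined. The two obstacles you flag are the right ones: the element-symmetry lemma (each element has equal replication on both sides) follows by summing the trade identity over all $t$-subsets through a fixed point and dividing by $\binom{k-1}{t-1}$; the downward-propagation step follows similarly by summing over all $t$-supersets of a fixed $(t-1)$-set and dividing by $k-t+1$. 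With those in hand, the stripping ensures both halves of the split are nonempty, and the induction closes. So your proposal is sound, and in fact more complete than the paper's own treatment of this Claim, which defers minimality to the cited reference.
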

\begin{IEEEproof}
Since $\Delta_{2t+2}$ represents the set of all permutations with an even number of transpositions, for $t$ even we have 
$$|T^{(1)}| = |\Delta_{2t+2}| = {t+1 \choose 0} + {t+1 \choose 2} + \ldots + {t+1 \choose t} = 2^t,$$
$$|T^{(2)}| = |\bar{\Delta}_{2t+2}| = {t+1 \choose 1} + {t+1 \choose 3} + \ldots + {t+1 \choose t+1} = 2^t.$$
The same results hold for odd $t$.
\end{IEEEproof}

\begin{claim}
$T$ is a $(v,k,t)$-trade~\cite{hwang1986structure}.
\end{claim}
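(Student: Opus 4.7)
The plan is to verify the three defining properties of a $(v,k,t)$-trade: (i) $|T^{(1)}|=|T^{(2)}|$ (already handled in the previous claim), (ii) $T^{(1)}\cap T^{(2)}=\emptyset$, and (iii) the $t$-subset cover condition, all powered by a single reparameterization of blocks.

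The first step I would take is to replace the permutation description of~\eqref{eq:trades} by a characteristic vector. For every $\sigma$ appearing in the construction, the induced block is determined by which member of each companion pair $(S_{2i-1},S_{2i})$ is selected; I would encode this choice by $\bm{\epsilon}\in\{0,1\}^{t+1}$, with $\epsilon_i=0$ selecting $S_{2i-1}$ and $\epsilon_i=1$ selecting $S_{2i}$. Because the defining sets are pairwise disjoint and $S_{2t+3}$ sits in every block, $\bm{\epsilon}$ uniquely specifies a block; the parity of $\sigma$ coincides with $\sum_i\epsilon_i\bmod 2$, so $T^{(1)}$ corresponds exactly to even-weight vectors and $T^{(2)}$ to odd-weight vectors. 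Property (ii) follows immediately since distinct weight parities force distinct $\bm{\epsilon}$-vectors and hence distinct blocks.

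Next I would fix an arbitrary $B_t\in\mathcal{P}_t$ and count the blocks containing it. If some companion pair $(S_{2i-1},S_{2i})$ has both of its members meeting $B_t$ nontrivially, no block can contain $B_t$ and the two counts are trivially zero. Otherwise, let $F\subseteq\{1,\ldots,t+1\}$ collect the pair indices whose unique active member meets $B_t$; containment of $B_t$ forces $\epsilon_i$ on $F$ and leaves the remaining $(t+1)-|F|$ coordinates free. Since $|B_t|=t$ and each $i\in F$ contributes at least one distinct element of $B_t$ (the rest coming from $S_{2t+3}$), I would conclude $|F|\le t$, so at least one free coordinate always remains.

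The closing step is a parity-matching argument: flipping any single free coordinate preserves containment of $B_t$ but toggles the total weight of $\bm{\epsilon}$, furnishing a fixed-point-free involution between even- and odd-weight vectors that cover $B_t$. Hence the $T^{(1)}$-count and $T^{(2)}$-count coincide, both equal to $2^{t-|F|}$, establishing property (iii). The main subtlety I anticipate is pinning down that the parity of $\sigma\in\Delta_{2t+2}\cup\bar{\Delta}_{2t+2}$ can be read off the $\bm{\epsilon}$-encoding exactly as the Hamming weight of $\bm{\epsilon}$; once that identification is justified via the generators $(i_j,i_j+1)$ of $P_\ell$, the remaining counting is routine.
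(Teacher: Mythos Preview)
Your proof is correct and shares the paper's case structure: both arguments split on whether $B_t$ meets both members of some companion pair, and in the remaining case both identify the set of constrained pair-indices (your $F$, the paper's $\{i_1,\ldots,i_{h_1}\}\cup\{j_1,\ldots,j_{h_2}\}$ with $|F|=h_1+h_2$) and observe that the unconstrained coordinates are free. The difference lies in the final count. The paper writes out the number of blocks in $T^{(1)}$ and $T^{(2)}$ containing $B_t$ as binomial sums over the free coordinates, then checks by hand (with a representative parity assumption on $t,h_1,h_2$) that both sums equal $2^{t-h_1-h_2}$. Your $\bm{\epsilon}$-encoding plus the single-coordinate flip gives a fixed-point-free involution between even- and odd-weight covering vectors, yielding the equality without any binomial identity or parity case-split; it also cleanly delivers the disjointness $T^{(1)}\cap T^{(2)}=\emptyset$, which the paper's proof does not address explicitly. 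So the two arguments are the same in outline, but your bijective finish is a bit more uniform and handles all parities of $t$ and $|F|$ at once.
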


\begin{IEEEproof}
It is easy to show that every block has size $k$ based on Property (3) of the collection of sets $S_1, S_2, \ldots, S_{2t + 3}$. 

To show that every $t$-subset of $V$ is contained in the same number of blocks of $T^{(1)}$ and of $T^{(2)}$, let $U$ be a $t$-subset of the ground set $V$.

\textbf{Case 1:} There exists some $i \in \{1, \ldots, t+1\}$ such that $U \cap S_{2i-1} \neq \emptyset$ and $U \cap S_{2i} \neq \emptyset$, i.e., $U$ has a nonempty intersection with a pair of companion sets. Then, by the construction of the trade, we have that $U$ is not completely contained in any of the blocks of the trade (i.e., in any block of $T^{(1)}$ and any block of $T^{(2)}$).

\textbf{Case 2:} For each pair of companion sets $S_{2i-1}, S_{2i}$, $1 \le i \le t+1$, $U$ has a nonempty intersection with at most one of the sets. Without loss of generality, assume that $U \cap S_{2 i - 1} \neq \emptyset$ for indices $i \in \{{i_1, i_2, \ldots, i_{h_1}\}}$ and $U \cap S_{2 j} \neq \emptyset$ for indices $j \in \{{j_1, j_2, \ldots, j_{h_2}\}}$, where $\{{i_1, i_2, \ldots, i_{h_1}\}} \cap \{{j_1, j_2, \ldots, j_{h_2}\}} = \emptyset$. Clearly, $h_1+h_2 \leq t$. For simplicity, assume that $h_1$ and $t$ are even, and that $h_2$ is odd. Then, $U$ is contained in
\begin{align}
{t+1-h_1-h_2 \choose  1} &+ {t+1-h_1-h_2 \choose  3}+ \ldots \notag \\ &+ {t+1-h_1-h_2 \choose  t-h_1-h_2} \notag
\end{align}
blocks of $T^{(1)}$ and in 
\begin{align}
{t+1-h_1-h_2 \choose  0} &+{t+1-h_1-h_2 \choose  2} \ldots \notag \\ &+ {t+1-h_1-h_2 \choose  t+1-h_1-h_2} 
\end{align}
blocks of $T^{(2)}$. In both cases, the sum equals $\frac{1}{2}2^{t+1-h_1-h_2}=2^{t-h_1-h_2}$.\end{IEEEproof} This completes the proof of the theorem.
\end{IEEEproof}
For purposes of data access balancing, we are interested in balanced trades, where the notion of \emph{balance} is captured via the block-sum discrepancy~\cite{pan2022balanced} defined as follows.
\begin{definition}
Let $B=(b_1,b_2,\ldots,b_k)$ be a block in $T^{(i)}$, where $i \in \{{1,2\}}$. The block-sum of $B$ equals $\Sigma_B=\sum_{i=1}^k\,b_i$, while the minimum and maximum block sums of $T^{(i)}$ are defined as $\min_{B \in T^{(i)}}\, \Sigma_B$ and 
$\max_{B \in T^{(i)}}\, \Sigma_B$, for $i \in \{{1,2\}}$. The block-discrepancy of $T^{(i)}$ is defined as 
$\max_{B \in T^{(i)}}\, \Sigma_B - \min_{B \in T^{(i)}}\, \Sigma_B$, for $i\in\{{1,2\}}$.
\end{definition}
Henceforth, we tacitly assume that the integer-valued elements of the blocks correspond to the popularity  of data chunks and that no two popularities are the same: label ``$1$'' indicates the most popular data chunk, while label ``$v$'' indicates the least popular data chunk. 

For simplicity, throughout the paper we focus on minimal trades with parameters $(4t,2t,t-1)$. The first and obvious observation made in~\cite{pan2022balanced} is that one can construct balanced trades with (perfect) zero block discrepancy by forcing the companion sets $S_{2i-1}$ and $S_{2i}$, $i\in [t]$, in the construction~\eqref{eq:trades}  to have the same sum of elements, i.e. $\Sigma(S_{2i-1})=\Sigma(S_{2i})$, $i\in[t]$, where $\Sigma(S_{i'})=\sum_{j\in S_{i'}} j,$ for $i'\in[2t]$\footnote{We will not consider potential balanced trades for which the companion set sum constraint is not satisfied, since in this case the problem becomes hard to analyze.}. If the sum-constraint for the companion sets is not satisfied, we say that the companion (and corresponding defining) sets are imbalanced. The total imbalance is measured by the total \emph{discrepancy}, defined as
$\sum_{i=1}^{t}|\Sigma(S_{2i-1})-\Sigma(S_{2i})|.$ Balanced trades ensure that both the blocks that may violate some storage design criteria as well as those that can be used to replace them guarantee uniform loads/access to the servers.

Note that this access balance might be compromised in the presence of dynamic popularity changes. Here, we model these changes by considering their ``magnitude'' $p$, referring to swaps of labels or popularity values $\{(i_1,j_1),(i_2,j_2),\ldots,(i_m,j_m)\},$ such that $|i_\ell-j_\ell|\le p$ and $i_1,\ldots,i_m,j_1,\ldots,j_m$ are all distinct. As an example, for $p=1$, $t=2$, and $V=[8]$, the allowed popularity swap sets $I$  are subsets of $\{(1,2),(2,3),(3,4),(4,5),(5,6),(6,7),(7,8)\}$ with the additional constraint that the same element is not included in two different swaps (as swaps with a common element correspond to \emph{nonadjacent} swaps and hence larger popularity change magnitudes). For example, for $p=1$, $\{{(1,2),(3,4),(5,6)\}}$ is a valid set of popularity swaps, while $\{{(1,2),(2,3),(5,6),(7,8)\}}$ is not since the element $2$ appears in two swaps. We do not impose any restrictions on the cardinality of the set $I$, although for some derivations pertaining to $p>1$ we assume that $|i_{\ell}-j_{\ell}|=p$ rather than $|i_{\ell}-j_{\ell}| \leq p$ (Theorem \ref{thm:maingeneralp}). In the latter case, we will use the notation $=p$ for all relevant subscripts.

In the absence of popularity swaps, the total set discrepancy of balanced defining sets is equal to zero. The \emph{worst-case total balance discrepancy for a given defining set} is the largest total set discrepancy that any valid collection of swaps $I$ could possibly induce. The goal of our work is to find (near) optimal balanced defining sets with respect to the worst-case total set discrepancy, i.e., we seek balanced defining sets that have the smallest worst-case total set discrepancy. For example, when $t=2$, the two companion sets 
$$
S_1=\{1,8\},S_2=\{3,6\}; S_3=\{2,7\},S_4=\{4,5\},
$$
constitute an example of balanced defining sets which are also optimal, among balanced sets, with respect to the worst-case balance discrepancy. The claim that the trade is balanced is easy to verify since the sums of entries of the companion sets are equal. For the second claim, we have two swaps $I=\{{(1,2),(5,6)\}}$ of magnitude one that lead to the following changes in the companion sets 
$$
S_1^\prime=\{2,8\},S_2^\prime=\{3,5\}; S_3^\prime=\{1,7\},S_4^\prime=\{4,6\},
$$
and result in a total set discrepancy of $|(2+8)-(3+5)|+|(1+7)-(4+6)|=4$, which by computer search is the worst-case total set discrepancy for any choice of $I$ involving magnitude-one swaps and for the above defining sets. For other choices of balanced defining sets, say $$S_1=\{1,4\},S_2=\{2,3\};S_3=\{5,8\},S_4=\{6,7\},$$ 
the worst-case total set discrepancy with respect to the choice of $I$ is $6$ (which is clearly $>4$), making these defining sets suboptimal with respect to popularity change stability to swaps of magnitude one.

Formally, we denote the set of all allowed collections of swaps of magnitude $p$ for $4t$ elements by $\sI_{t,p}$, and the set of all balanced collections of defining sets by $\sS_t$. The goal is to find defining sets $(S_1^*,\ldots,S_{2t}^*)$ such that their worst-case total discrepancy $\max_{I_{t,p}\in\sI_{t,p}} D(S_1^*,\ldots,S_{2t}^*;I_{t,p};t)$ is the smallest possible, i.e.,
\begin{align}
&(S_1^*,\ldots,S_{2t}^*)=\argmin_{(S_1,\ldots,S_{2t})\in\mathcal{S}_t}\max_{I_{t,p}\in\mathcal{I}_{t,p}}D(S_1,\ldots,S_{2t};I_{t,p};t), \notag \\
&D^*(t,p)=\min_{(S_1,\ldots,S_{2t})\in\mathcal{S}_t}\max_{I_{t,p}\in\mathcal{I}_{t,p}}D(S_1,\ldots,S_{2t};I_{t,p};t), \notag \\
&D(S_1,\ldots,S_{2t};I_{t,p};t)=\sum^{t}_{i=1}\left|\Sigma(S_{2i-1}^\prime)-\Sigma(S_{2i}^\prime)\right|,
\end{align}
where $(S_1^\prime,\ldots,S_{2t}^\prime)$ denote the defining sets after the popularity swaps have been applied.

Note that it is impossible to find $(S_1^*,\ldots,S_{2t}^*)$ even for modest values of $t$ by exhaustive search, as the size of the search space grows super-exponentially in $t$. In~\cite{pan2022balanced}, both upper and lower bounds on the number of different partitions of $[4t]$ induced by valid 
balanced defining sets of cardinality $2$ were derived  using integer partition formulas. We provide improved bounds in the following theorem. Specifically, we consider
\begin{align*}
  &N=\Bigg|\Bigg\{ \Big\{ \{S_1,S_2\},\ldots,\{S_{2t-1},S_{2t}\} \Big\} :(S_1,\ldots,S_{2t})\in\sS_t\Bigg\}\Bigg|, \notag \\
  &|S_i|=2, \; i=1,2,\ldots,t.
\end{align*}
i.e., the number of ways to partition $[4t]$ into $t$ pairs of balanced companion sets of cardinality $2$. 
\begin{theorem}\label{thm:tightbounds}
The number of partitions $N$ is at least $\exp(t\log t-O(t))$ and at most $\exp(2t\log t+O(t))$, where we used the natural logarithm. Note that the total number of partitions (balanced and unbalanced) equals $\frac{(4t)!}{t!2^t}=\exp(3t\log t+O(t))$. 
\end{theorem}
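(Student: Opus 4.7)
The plan is to first reformulate $N$ as a counting problem over balanced $4$-subsets, then handle the two bounds separately. For any balanced companion pair $\{S_{2i-1}, S_{2i}\}$ of $2$-element sets, if the four elements of $S_{2i-1}\cup S_{2i}$ are $a<b<c<d$, then among the three ways of partitioning $\{a,b,c,d\}$ into two pairs, monotonicity forces $a+b<c+d$ and $a+c<b+d$, so the only candidate equal-sum splitting is $\{\{a,d\},\{b,c\}\}$, and it is actually balanced iff $a+d=b+c$. Hence $N$ equals the number of partitions of $[4t]$ into $t$ unordered $4$-subsets satisfying ``outer sum equals inner sum.''

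For the lower bound $N\ge\exp(t\log t - O(t))$, I exploit central symmetry. The $2t$ reflected pairs $\{i,\,4t+1-i\}$, $i=1,\ldots,2t$, each sum to $4t+1$ and together partition $[4t]$. For \emph{any} perfect matching of these $2t$ pairs into $t$ unordered pairs-of-pairs, each resulting $4$-subset automatically satisfies the ``outer equals inner'' condition (both sums equal $4t+1$), and distinct matchings clearly produce distinct partitions of $[4t]$. Since the number of perfect matchings on $2t$ objects is $(2t-1)!! = (2t)!/(2^t t!)$, Stirling's formula gives $\log(2t-1)!! = t\log t + O(t)$.

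For the upper bound $N\le\exp(2t\log t + O(t))$, I bound the number $B$ of balanced $4$-subsets of $[4t]$ and then bound the number of partitions into $t$ disjoint such subsets by $\binom{B}{t}$. A balanced $4$-subset is determined by any three of its elements (the fourth is forced by $a+d=b+c$), so $B\le\binom{4t}{3}=O(t^3)$, giving
\[
N \;\le\; \binom{B}{t} \;\le\; \frac{B^t}{t!} \;\le\; \frac{(4t)^{3t}}{6^t\,t!},
\]
and Stirling yields $\log N \le 3t\log(4t) - \log t! + O(t) = 2t\log t + O(t)$. The only conceptual step is the combinatorial reformulation at the start; after that, both bounds are direct counting estimates, and the asymptotic $\exp(3t\log t+O(t))$ for the total count $(4t)!/(t!\,2^t)$ follows from a routine Stirling calculation.
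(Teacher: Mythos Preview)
Your proposal is correct and essentially matches the paper's approach. Your lower bound via the $2t$ reflected pairs $\{i,4t+1-i\}$ is exactly the paper's argument that forcing $\Sigma(S_i)=4t+1$ yields $(2t)!/(2^t t!)$ partitions, and your upper bound ``three elements determine the balanced $4$-subset, so at most $(4t)^{3t}/t!$'' coincides with the paper's simpler sequential-choice argument; the only difference is that the paper also records an alternative upper bound via the Hardy--Ramanujan partition asymptotics, which you omit but do not need.
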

\begin{proof}
We first show that $N$ is at least $$\frac{(2t)!}{t!2^t}=\frac{(2t)!}{t!}\ge t!=\exp(t\log t-O(t)).$$
This follows by forcing the sums $\sum(S_i)$ to be equal to $4t+1$ for all $i\in[2t]$. Clearly, the sets $(S_1,\ldots,S_{2t})$ are balanced. Also, there are $(2t)!$ ordering choices for $(S_1,\ldots,S_{2t})$. Furthermore, swapping $S_{2i-1}$ and $S_{2i}$, $i\in [t]$ and swapping $\{S_{2i_1-1},S_{2i_1}\}$ and $\{S_{2i_2-1},S_{2i_2}\}$, $i_1,i_2\in[t]$ does not change the partition. Therefore, 
$$N \geq \frac{(2t)!}{t!2^t}\ge \exp(t\log t-O(t)).$$

We show next that $N$ is at most $$(8t)^{2t}\mathcal{P}(2t(4t+1))=\exp(2t\log t+O(t)),$$
where $\mathcal{P}(n)$ is the number of integer partitions of $n$, i.e., the number of ways to represent $n$ as a sum of positive integers, i.e., the number of sets $\{x_1,\ldots,x_m\}$ where $m\in [n]$ such that $\sum^m_{j=1}x_j=n$. A well-known result by Hardy and Ramanujan~\cite{hardy1918asymptotic} established the asymptotic growth of the partition function provided below
$$\mathcal{P}(n)\sim \frac{1}{4\sqrt{3}n}\exp\left(\pi\sqrt{\frac{2n}{3}}\right).$$

Note that $$\sum^{2t}_{i=1}\big(\sum(S_i)\big)=\sum^{4t}_{j=1}j=2t(4t+1).$$
Therefore, $\{\sum(S_1),\ldots,\sum(S_{2t})\}$ is an integer partition of $2t(4t+1)$. Note that for each value of $\sum(S_i)$, $i\in[2t]$, there are at most $\sum(S_i)\le 8t$ choices for the set $S_i$. Hence, for each choice  of $\{\sum(S_1),\ldots,\sum(S_{2t})\}$, there are at most $(8t)^{2t}$ choices for the partition $\{\{S_1,S_2\},\ldots,\{S_{2t-1},S_{2t}\}\}$. Invoking the Hardy-Ramanujan formula, we obtain that there are at most $\mathcal{P}(2t(4t+1))\sim \exp(O(t))$ possibilities for $\{\sum(S_1),\ldots,\sum(S_{2t})\}$. This leads to at most $(8t)^{2t}\exp(O(t))=\exp(2t\log t+O(t))$ choices for the partition $\{\{S_1,S_2\},\ldots,\{S_{2t-1},S_{2t}\}\}$ such that $(S_1,\ldots,S_{2t})\in \sS_t$. 

The above arguments are of interest because of their connections to the theory of integer partitions \cite{andrews2004integer}, on which there is a large body of works that allows for adding different problem constraints. Still, simpler arguments, like the ones described below, result in an upper bound with matching first order exponent and improved second order exponent. Suppose we choose the companion sets sequentially. When choosing the $i$th companion sets, there are at most $(4t-4i+3)(4t-4i+2)(4t-4i)\le (4t)^3$ options. Therefore, there are at most $(4t)^{3t}$ number of choices for the $t$ pairs of companion sets. On the other hand, the permutations of the $t$ pairs of companion sets do not change the partition. Hence, the total number of partition is at most $\frac{(4t)^{3t}}{t!}=\exp(2t\log t+O(t))$. 
\end{proof}
Next, we remark that a recursive construction for balanced defining sets robust under $p=1$ popularity swaps was discussed in~\cite{pan2022balanced}, where the building blocks of larger trades were optimal defining sets for small values of $t$ (i.e., $t\in\{1,2,3,4,5,6\}$) found via computer search. The work also presented a lower bound on the worst-case total discrepancy of the form $\frac{2}{3}(t-\frac{1}{3})$, again for $p=1$. 

In what follows, we first improve this lower bound to $\frac{3t-2}{2}$, and provide a completely analytic recursive construction that provably achieves a smaller worst-case total set discrepancy compared to the previously reported one, for the case $p=1$. We then generalize the lower bound to apply to arbitrary popularity changes of magnitude exactly $p$, and present a recursive construction of the companion sets for $p=2$ that has a worst-case total set discrepancy close to the lower bound. Our main results are summarized in the following theorems.
\begin{theorem}\label{thm:main}
For any integer $t>0$, we have
\begin{align}\label{eq:lower}
   D^*(t,1)\ge \frac{3t-2}{2}. 
\end{align}
Moreover, for any positive integer $z\ge 2$ and $t = 5 \cdot 2^{z-2}-1$, we have
\begin{align}\label{eq:upper}
   D^*(t,1)\le 2^{z+1}-2,
\end{align}
which implies $D^*(t,1)\le \frac{8t-2}{5}$. Consequently, the upper and lower bound only differ by a constant factor $1.07$.
\end{theorem}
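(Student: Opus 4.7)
My plan for proving Theorem~\ref{thm:main} treats the lower bound \eqref{eq:lower} and the upper bound \eqref{eq:upper} separately, since they have very different flavors.

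For the lower bound, I fix an arbitrary balanced defining set $(S_1,\ldots,S_{2t})\in\mathcal{S}_t$ with $|S_i|=2$ and construct a valid collection $I_{t,1}$ of magnitude-one swaps whose total discrepancy is at least $(3t-2)/2$. The key tool is a pair of auxiliary graphs built on the elements $1,\ldots,4t$. Each adjacency $(j,j+1)$ is classified by where $j$ and $j+1$ sit: both in the same set (inert), in different sets of the same companion pair (a ``Type-B'' swap that contributes $\pm 2$ to a single $D_i$), or in sets belonging to different companion pairs (a ``Type-C'' swap that contributes $\pm 1$ to two distinct $D_i$'s). The first auxiliary graph records Type-B adjacencies on the $t$ companion pairs; the second records Type-C adjacencies. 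Using $|S_i|=2$ and the balance constraint $\Sigma(S_{2i-1})=\Sigma(S_{2i})$, I derive lower bounds on the total number of Type-B and Type-C adjacencies available in any balanced configuration.

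Given those counts, the task reduces to choosing a matching in the path $1\mathord-2\mathord-\cdots\mathord-4t$ that maximizes $\sum_i |D_i|$, subject to sign consistency so that the contributions to each individual $D_i$ accumulate rather than cancel. I plan to handle this by (i) splitting the adjacencies into the two natural parities (odd-start vs.\ even-start) to guarantee the matching is non-overlapping, (ii) picking the parity class whose weighted contribution is larger, and (iii) using the orientation freedom granted by the paired-graph structure to enforce sign consistency within each companion pair. Carrying out the bookkeeping and solving the resulting extremal inequality yields the constant $3/2$ and the additive $-1$ in $(3t-2)/2$.

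For the upper bound, I will give an explicit recursive construction. Set $t_z=5\cdot 2^{z-2}-1$ and note the recursion $t_{z+1}=2t_z+1$ with target $D_{z+1}=2D_z+2$. The base case $z=2$, $t_2=4$, is handled by exhibiting concrete balanced companion sets on $[16]$ and verifying by direct enumeration that the worst-case magnitude-one discrepancy is at most $6=2^3-2$. The inductive step produces defining sets on $[4t_{z+1}]$ by concatenating two suitably shifted and relabelled copies of the level-$z$ construction and inserting one additional balanced companion pair that realizes the extra ``$+1$'' in $t_{z+1}=2t_z+1$ and simultaneously buffers the splice. Any valid magnitude-one swap collection then decomposes into swaps interior to the two halves, plus at most one boundary swap; the interior contributions are each at most $D_z$, the boundary contributes at most $2$, and summing gives $D_{z+1}\le 2D_z+2$, which solves to $D_z=2^{z+1}-2$ and hence $D^*(t,1)\le (8t-2)/5$.

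The main obstacle is the lower bound. The upper bound is a clean induction once a good base case and combining operation are in hand, but the lower bound must handle an adversarially chosen balanced defining set. The hard part is avoiding sign cancellation: na\"ively, different chosen swaps may contribute opposite signs to the same $D_i$, so a large unsigned matching need not translate into large $\sum_i |D_i|$. The paired-graph perspective is introduced precisely to recast the problem as an orientation-plus-matching problem on graphs whose structure is governed by balance, and squeezing out the factor $3/2$ (instead of the previous $2/3$ from~\cite{pan2022balanced}) requires exploiting the $|S_i|=2$ balance condition much more tightly than before.
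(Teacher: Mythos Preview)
Your upper-bound plan matches the paper almost exactly: the paper uses the unique optimal $t=4$ defining sets on $[16]$ with worst-case discrepancy $6$ as the base, then recursively builds level $z{+}1$ from two shifted copies of the level-$z$ sets plus one extra companion pair $\{1,5\cdot 2^{z+1}-4\},\{5\cdot 2^z-2,5\cdot 2^z-1\}$, and proves $d_{z+1}\le 2d_z+2$ by a short case analysis on which boundary swaps occur. Your description is the same idea.

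Your lower-bound plan, however, is not the paper's argument and has a real gap at the step you flag as ``orientation freedom.'' You propose to \emph{construct} a good swap set: take one parity class of adjacencies (to get a matching for free) and then orient things so that the $\pm 1$ contributions to each $D_i$ add up rather than cancel. But once the defining sets are fixed, the sign a swap $(j,j+1)$ contributes to $D_i$ is determined by whether $j$ lies in $S_{2i-1}$ or $S_{2i}$; the only global freedom is to negate \emph{all} contributions to a given $D_i$ by swapping the roles of $S_{2i-1}$ and $S_{2i}$. That does not let you align an arbitrary subset of incident swaps, and in particular a single parity class can hit a companion pair with a mix of signs that no relabelling repairs. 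Nothing in your outline explains how the claimed bookkeeping survives this, and the paper's earlier $2/3$ bound used essentially this constructive viewpoint, so it is unlikely that the same framework jumps to $3/2$ without a new idea.

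The paper takes the opposite route: it never constructs a swap set. It fixes a \emph{minimum-size optimal} swap set $I^*$ and studies two graphs on the $t$ companion pairs. In $G_{\mathrm{swp}}$ the edges are the swaps in $I^*$; in $G_{\mathrm{pot}}$ the arcs are the swaps \emph{not} in $I^*$ that would strictly increase the discrepancy of the pair they leave. The key inequality is an exchange argument (Lemma~\ref{lem:removeandadd}): for every node $v_i$, the in-degree in $G_{\mathrm{pot}}$ is at most the degree in $G_{\mathrm{swp}}$, since otherwise one could delete the swaps incident to $v_i$ and insert the incoming potential swaps to strictly improve $I^*$. Separately, a structural classification of balanced two-element companion pairs into three ``Types'' (based on which of the gaps $\ell_2-\ell_1$, $\ell_3-\ell_2$ equal $1$) shows that each node satisfies $d^{\mathrm{swp}}(v_i)+d^{\mathrm{pot}}_{\mathrm{out}}(v_i)\ge 3$ (Lemma~\ref{lem:type}). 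Summing, using $\sum d^{\mathrm{pot}}_{\mathrm{in}}\ge \sum d^{\mathrm{pot}}_{\mathrm{out}}-2$ (the $-2$ from the two virtual boundary arcs), and combining with $d^{\mathrm{pot}}_{\mathrm{in}}\le d^{\mathrm{swp}}$ gives $2\sum d^{\mathrm{swp}}\ge 3t-2$; since $D=2|I^*|=\sum d^{\mathrm{swp}}$, the bound follows. The sign-consistency obstacle you are worried about is exactly what the optimality-based exchange argument sidesteps.
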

The following result generalizes Theorem~\ref{thm:main} for arbitrary values of $p$, and uses a similar proof technique. We remark that the result is obtained by restricting to cases where the popularity swaps $(i,j)$ are of distances exactly $|i-j|=p$ and therefore, is not tight.
\begin{theorem}\label{thm:maingeneralp}
For any integer $t>0$, we have
\begin{align}\label{eq:lowergeneralp}
   D^*(t,p)\ge \frac{p[3t-2(p-1)]}{2}. 
\end{align}
\end{theorem}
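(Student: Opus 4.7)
My plan is to closely follow the proof technique used for the $p=1$ case in Theorem~\ref{thm:main} and adapt it to swaps of magnitude exactly $p$. Fix any balanced defining collection $(S_1,\ldots,S_{2t})$; the goal is to construct a valid swap set $I$ whose induced total discrepancy $D(S_1,\ldots,S_{2t};I;t)$ is at least $\frac{p[3t-2(p-1)]}{2}$.

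First, I would exploit the structure of the restricted swap graph $G_p$ on vertex set $[4t]$ with edge set $\{(i,i+p):1\le i\le 4t-p\}$. This graph decomposes into $p$ vertex-disjoint paths $P_1,\ldots,P_p$, where $P_k$ is the path on $\{k,k+p,k+2p,\ldots\}$ listed in increasing order. Any valid magnitude-$p$ swap set is a union of matchings, one per path. A single swap $(i,i+p)$ changes a companion-pair discrepancy $d_j=\Sigma(S_{2j-1})-\Sigma(S_{2j})$ by $\pm 2p$, $\pm p$, or $0$, according to how its endpoints are distributed across companion sets.

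Next, I would construct the adversarial swap set $I$ path by path using the $p=1$ argument as a building block. The combinatorial reasoning underlying the $p=1$ lower bound of Theorem~\ref{thm:main} produces, for any balanced companion-pair configuration on a linear arrangement of elements, a matching whose induced discrepancy scales linearly with the number of elements. Applied to each $P_k$ (with the inherited companion-pair labels, and with swap amplitude $p$ instead of $1$), it provides a matching $I_k$ whose per-path contribution to $\sum_j|d_j|$ is at least roughly $\frac{3pn_k}{8}$, where $n_k$ is the length of $P_k$. Summing these contributions over $k=1,\ldots,p$ and using $\sum_k n_k = 4t$ yields the main term $\frac{3pt}{2}$; the $-p(p-1)$ correction arises from boundary losses at the two endpoints of each of the $p$ paths, where the matching is forced to omit a vertex.

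The principal obstacle is that when a companion pair $\{S_{2j-1},S_{2j}\}$ has elements spread across more than one path, the per-path contributions to the signed quantity $d_j$ may cancel in the absolute value $|d_j|$. I would resolve this via a sign-alignment step: each path $P_k$ admits two canonical near-perfect matchings whose effects on every $d_j$ are opposite in sign, yielding $2^p$ candidate global swap sets. Selecting signs to maximize $\sum_j|d_j|$ (or equivalently, invoking a random-sign argument and using convexity) shows that some candidate attains the claimed bound. Making this sign-alignment rigorous when the companion pairs are arbitrarily interleaved across the $p$ paths is where most of the proof work will concentrate.
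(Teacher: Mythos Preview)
Your approach is genuinely different from the paper's, and it has a real gap.

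The paper does \emph{not} decompose into residue-class paths. Instead it reruns the entire Section~\ref{sec:lower_bound} argument verbatim, but with every swap $(j,j+1)$ replaced by a swap $(j,j+p)$: the nodes of $G_{\text{swp}}$ and $G_{\text{pot}}$ are still the $t$ companion pairs, edges and arcs now correspond to magnitude-$p$ swaps, and each swap in the optimal set $I^*$ contributes $2p$ (rather than $2$) to the total discrepancy. Lemmas~\ref{lem:optimalswap}, \ref{lem:removeandadd}, and~\ref{lem:type} carry over essentially unchanged; the only new ingredient is a larger boundary loss at the virtual node $v_0$, which produces the $-2(p-1)$ term. No per-path decomposition, no sign alignment.

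The gap in your proposal is that the Section~\ref{sec:lower_bound} argument is not a black box that applies to an arbitrary labeled linear arrangement. It crucially uses that each companion pair $(S_{2i-1},S_{2i})$ consists of two \emph{balanced} size-two sets; this is exactly what drives the Type~1/2/3 case analysis in Lemma~\ref{lem:type} and yields the ``three potential swaps per pair'' count. When you restrict to a single path $P_k$, a companion pair may contribute $0,1,2,3$ or $4$ elements to $P_k$, and even when it contributes several, there is no reason for those elements to form a balanced sub-pair. So the per-path instance is not an instance of the $p=1$ problem, and you cannot invoke the $\tfrac{3n_k}{8}$-type bound there. A second, smaller issue is that the Section~\ref{sec:lower_bound} argument is not constructive: it lower-bounds $|I^*|$ by comparing in- and out-degrees in $G_{\text{pot}}(I^*)$, rather than exhibiting a swap set; so ``the $p=1$ argument produces a matching'' already mis-describes what is available to you.

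The cleanest fix is to abandon the path decomposition and instead imitate Section~\ref{sec:lower_bound} directly on the full set $[4t]$ with swaps $(j,j+p)$, as the paper does.
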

Finally, the next theorem provides a (slight) improvement of the lower bound in Theorem~\ref{thm:maingeneralp} for $p=2$, as well as a an upper bound.
\begin{theorem}\label{thm:p=2}
For any integer $t>0$, we have
\begin{align}\label{eq:lowerp=2}
   D^*(t,2)\ge \frac{35t-40}{11}. 
\end{align}
For any integer $z\ge 1$ and $t=2z+1$, we have
\begin{align}\label{eq:upperp=2}
  D^*(t,2)\le 9z+4=\frac{9t-1}{2},
\end{align}
\end{theorem}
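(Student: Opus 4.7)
The plan is to prove the lower bound \eqref{eq:lowerp=2} by adapting the paired-graph/charging framework from the proof of Theorem~\ref{thm:main}, and the upper bound \eqref{eq:upperp=2} by exhibiting an explicit recursive construction that adds a fixed amount of discrepancy per induction step.

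For the lower bound, the starting observation is that, under a magnitude-two swap $(i,i+2)$, the pair sum $\Sigma(S_a)$ of the set containing $i$ changes by $\pm 2$, the pair sum of the set containing $i+2$ changes by $\mp 2$, and nothing happens if the two elements lie in the same set. In contrast to the $p=1$ case analyzed in Section~\ref{sec:lower_bound}, the element $i+1$ is ``invisible'' to the swap $(i,i+2)$, so the relevant combinatorial object is the distribution of each consecutive triple $(i,i+1,i+2)\subseteq [4t]$ across the $t$ companion pairs. I would classify each such triple by the pattern of companion-pair memberships of its three elements (for example: all three in one pair; $i,i+2$ together with $i+1$ elsewhere; all three in distinct pairs; or a configuration that already commits an imbalance) and, for each pattern, compute a lower bound on the discrepancy that an adversary can force on the affected pair or pairs by deciding whether to execute the swap $(i,i+2)$. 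Since consecutive triples overlap in two elements, the contributions cannot simply be summed; they must be combined through an amortized or linear-programming argument whose variables are the frequencies of the local patterns and whose constraints come from the balance condition $\Sigma(S_{2j-1})=\Sigma(S_{2j})$, the bookkeeping identity $\sum_{j=1}^{2t}\Sigma(S_j)=2t(4t+1)$, and the total element count $4t$. The main technical obstacle is the amortization: showing that the resulting LP has optimum at least $\tfrac{35t-40}{11}$ is where the fraction $\tfrac{35}{11}$ will emerge, with the $-\tfrac{40}{11}$ correction accounting for boundary triples near~$1$ and~$4t$ where some of the patterns cannot be realized.

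For the upper bound, I would use induction on $z$, with base case $z=1$ (i.e.\ $t=3$) handled by exhibiting three explicit balanced companion pairs on $[12]$ whose worst-case discrepancy under magnitude-two swaps is at most $13$; the small size makes this verifiable by inspection or brute force. For the inductive step, given a construction on $[4(2z+1)]$ that achieves discrepancy at most $9z+4$, I would append two additional balanced companion pairs on the next eight integers $\{8z+5,\ldots,8z+12\}$ (so that $t$ increases by $2$ as required by $t=2z+1\to 2(z+1)+1$) and show that the worst-case contribution of the appended portion, together with all straddling swaps at the junction, is at most $9$. The appended pairs should be chosen to be self-balanced and arranged so that any swap $(i,i+2)$ confined to the new block, or confined to the old block, is already accounted for by the inductive hypothesis or by a local analysis on the new eight elements.

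The main obstacle in the inductive step is the junction: swaps of the form $(8z+3,8z+5)$ and $(8z+4,8z+6)$ straddle the old and new portions and can, a priori, increase the discrepancy of both the last old pair and the first new pair simultaneously. I would address this by choosing the junction assignment so that each straddling swap either moves values within a single pair (contributing zero) or contributes to a pair whose ``discrepancy budget'' has been reserved for exactly such swaps. Once the budget accounting is set up correctly, the $9z+4$ bound propagates, and the resulting upper bound $\tfrac{9t-1}{2}$ for $t=2z+1$ follows immediately.
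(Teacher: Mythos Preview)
Your lower-bound plan has a genuine gap. You restrict the adversary to swaps $(i,i+2)$ of magnitude exactly two (you say ``the element $i+1$ is invisible to the swap $(i,i+2)$'' and base the entire argument on classifying triples and deciding whether to execute $(i,i+2)$). But as the paper remarks right before the proof, restricting to exact-magnitude swaps yields at best the bound of Theorem~\ref{thm:maingeneralp}, namely $3t-2$ for $p=2$, which is strictly weaker than $\tfrac{35t-40}{11}$ once $t\ge 10$. The improvement to $\tfrac{35}{11}$ comes precisely from allowing the adversary to mix magnitude-one and magnitude-two swaps; the paper's proof builds \emph{weighted} graphs $G^2_{\text{swp}},G^2_{\text{pot}}$ whose vertices are individual integers (not companion pairs) and whose edge weights record the swap magnitude. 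The fraction $\tfrac{35}{11}$ does not emerge from a single LP over triple patterns; it arises from two separate lower bounds on the total swap weight --- one of the form $\tfrac{8t-|A_1|-|A_3|-8}{2}$ and one of the form $5|A_3|$, where $A_1,A_3$ are families of companion pairs in which some $S_j$ contains two integers at distance $2$ --- combined with weights $\tfrac{10}{11}$ and $\tfrac{1}{11}$ so that $|A_3|$ cancels, followed by $|A_1|\le t$. Your triple-classification scheme does not surface these sets $A_1,A_3$, and without the magnitude-one swaps you cannot force enough discrepancy to reach $\tfrac{35t-40}{11}$. The ``LP optimum will turn out to be $\tfrac{35t-40}{11}$'' step is a hope, not an argument.

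Your upper-bound plan is close in spirit to what the paper does (a period-$8$ recursion adding two companion pairs per step, with a junction analysis), but the paper's construction does not simply append eight integers at the end. It fixes special boundary pairs $(S_1,S_2)$ and $(S_{4z+1},S_{4z+2})$ with a different pattern from the interior, and the recursion is analyzed by tracking two quantities $d^2_{z,l}$ and $d^2_{z,r}$ (depending on which swap involving the element $3$ is present in the optimal swap set), satisfying coupled recursions that together give $\max\{d^2_{z,l},d^2_{z,r}\}\le 9z+4$. Your one-sided append with a single ``junction budget of $9$'' would need to be made concrete before one could check that the straddling swaps cost at most $9$; the paper's two-case recursion is exactly the mechanism that absorbs the junction cost.
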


The proofs of these results are presented in the next sections. 

\section{A Recursive Construction for $p=1$}
\label{sec:upper_bound}
As stated in Theorem~\ref{thm:tightbounds}, the search space for the optimal partition of $[4t]$ among all valid partitions induced by balanced defining sets is intractable. In addition, the number of optimal partitions and the corresponding optimal set discrepancy depend on the value of $t$. Numerical results obtained using brute-force search for $p=1$ were presented in~\cite{pan2022balanced} and are summarized in Tab. \ref{tab:p=1}.
\begin{table}[h!]
\caption{Number of optimal partitions and the corresponding set discrepancy for $p=1$}
\label{tab:p=1}
\centering
\begin{tabular}{ |c|c|c|}
 \hline
 Value of $t$& Number of optimal partitions &Optimal set discrepancy\\
 \hline 
 3& 10 &6\\
 \hline
 4& 1 &6\\
 \hline
 5& 1 &8\\
 \hline
 6& 22 &10\\
 \hline
\end{tabular}
\end{table}
Note that for $t=4$ and $t=5$ the optimal constructions are \emph{unique}\footnote{We do not have any analytical proofs for the uniqueness results. The problem of determining for which values of $t$ the optimal defining sets are unique is open.}. 

In what follows, we provide constructions for defining sets $(S_1,\ldots,S_{2t})\in\mathcal{S}_t$ for $p=1$ and for values of $t$ that satisfy $t=5\cdot2^{z-2}-1$, $z\ge 2$, such that the total discrepancy is upper bounded by
$$
\max_{I_{t,1}\in\sI_{t,1}} D(S_1,\ldots,S_{2t};I_{t,1};t) \leq \frac{8t}{5}-\frac{2}{5}.
$$
To begin with, consider the case $z=2$ and $t=5\cdot 2^{z-2}-1=4$ and add superscripts to the defining sets $S$ to indicate the value of $z$. 
We use 
the unique optimal defining sets for $t=4$ \cite{pan2022balanced}, which equal
\begin{align}\label{eq:comp16}
&S^2_1=\{1,16\},S^2_2=\{8,9\}; \, S^2_3=\{2,7\},S^2_4=\{4,5\};\\
&S^2_5=\{10,15\},S^2_6=\{12,13\}; \, S^2_7=\{3,14\},S^2_8=\{6,11\}.\nonumber
\end{align}
The discrepancy upon performing the worst-case magnitude-one swaps, e.g., $\{(1,2),(6,7),(11,12)\}$ (note that there are multiple sets of magnitude-one swaps resulting in the same discrepancy), equals $6=2^{z+1}-2$. 
Next, we describe a recursive construction for the defining sets $(S^{z+1}_1,\ldots,S^{z+1}_{2t})$ and $t=5\cdot 2^{(z+1)-2}-1$, based on $(S^{z}_1,\ldots,S^{z}_{2t}),$ for $t=5\cdot 2^{z-2}-1$. The construction is as follows:
\begin{align}\label{eq:construction}
    &S^{z+1}_{i}=S^{z}_{i}+1, \forall i\in[5\cdot 2^{z-1}-1],\nonumber\\
    &S^{z+1}_{i}=S^{z}_{i-5\cdot 2^{z-1}+2}+5 \cdot 2^{z}-1,
    \forall i\in[5 \cdot 2^{z-1}-1,5 \cdot 2^{z}-4],\nonumber\\
    &S^{z+1}_{5 \cdot 2^{z}-3}=\{1,5 \cdot 2^{z+1}-4\},\nonumber\\
    &S^{z+1}_{5 \cdot 2^{z}-2}=\{5 \cdot 2^{z}-2,5 \cdot 2^{z}-1\},
\end{align}
where for an integer set $S$ and an integer $a$, we define $S+a=\{x+a:x\in S\}$. An example of the construction for $z=4$ is depicted in Fig.~\ref{fig:examplepartition}. The intuition behind the construction is as follows: once we fix the last companion sets to $S^{z+1}_{5 \cdot 2^{z}-3}=\{1,5\cdot2^{z+1}-4\}$ and $S^{z+1}_{5 \cdot 2^{z}-2}=\{5 \cdot 2^{z}-2,5 \cdot 2^{z}-1\}$, the elements used in the other $t-1$ 
companion sets must come from two disjoint and symmetric intervals $[2,5\cdot 2^{z}-3]$ and $[5\cdot 2^{z},5 \cdot 2^{z+1}-5]$. Therefore, we can reuse the construction of companion sets $(S^{z}_1,\ldots,S^{z}_{2t})$ for $t=5 \cdot2^{z-2}-1$. 

\begin{figure}[t]
\centering
\includegraphics[width=1\linewidth]{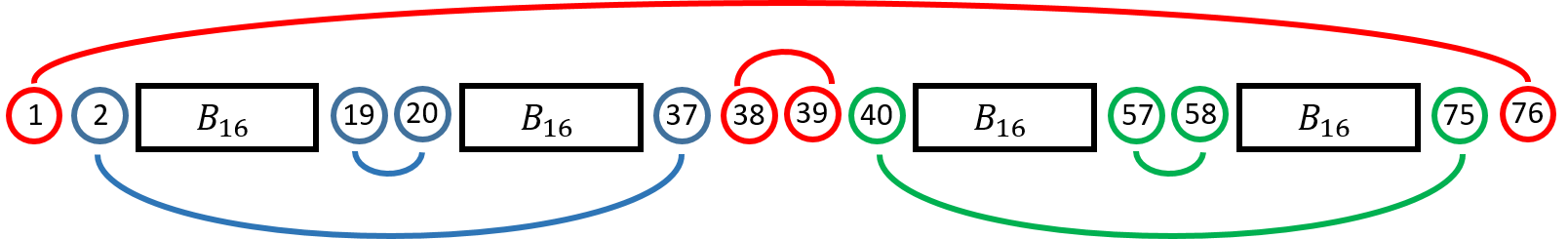}
\caption{An example of the recursive construction for $z=4$ and $t=19$. We use the companion sets in~\eqref{eq:comp16}, denoted by $B_{16}$, as building blocks. Integers within the same companion sets have the same color. The companion sets have cardinality $2$.}
\label{fig:examplepartition}
\end{figure}

Next, we upper-bound the worst-case total discrepancy for the recursively constructed collection of companion sets as follows. Let 
$$
d_z=\max_{I_{t,1}\in\mathcal{I}_{t,1}}D(S^z_1,\ldots,S^z_{2t};I_{t,1};t),\; t=5 \cdot 2^{z-2}-1.
$$ 
We have the following lemma.

\begin{lemma}\label{lemma:dzrecursion}
For $z\ge 2$, we have $d_{z+1}\le 2d_z+2$.
\end{lemma}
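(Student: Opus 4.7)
Let $t_z=5\cdot 2^{z-2}-1$ and $t_{z+1}=2t_z+1=5\cdot 2^{z-1}-1$. Fix an arbitrary valid magnitude-one swap set $I\in\mathcal{I}_{t_{z+1},1}$. The plan is to decompose $I$ along the three parts of the recursive construction in~\eqref{eq:construction}: write $I=I_1\cup I_2\cup I_3$, where $I_1$ consists of transpositions lying entirely in the first-batch window $[2,5\cdot 2^z-3]$, $I_2$ of those in the second-batch window $[5\cdot 2^z,5\cdot 2^{z+1}-5]$, and $I_3$ of the remaining ``boundary'' transpositions. One checks that $I_3$ is a subset of the five candidates $\{(1,2),\,(5\cdot 2^z-3,5\cdot 2^z-2),\,(5\cdot 2^z-2,5\cdot 2^z-1),\,(5\cdot 2^z-1,5\cdot 2^z),\,(5\cdot 2^{z+1}-5,5\cdot 2^{z+1}-4)\}$, subject to element-disjointness. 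Since every companion pair of $S^{z+1}$ is balanced, the post-swap discrepancy of pair $j$ equals $|\delta_{2j-1}-\delta_{2j}|$, where $\delta_k$ is the additive change in $\Sigma(S_k^{z+1})$; as $\delta_k$ decomposes linearly over $I_1,I_2,I_3$, the triangle inequality gives the sub-additivity
\begin{equation*}
D(S^{z+1}_1,\ldots,S^{z+1}_{2t_{z+1}};I;t_{z+1})\ \le\ \sum_{\kappa=1}^{3} D(S^{z+1}_1,\ldots,S^{z+1}_{2t_{z+1}};I_\kappa;t_{z+1}).
\end{equation*}

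Since $S^{z+1}_i=S^z_i+1$ on the first batch, the translation $i\mapsto i-1$ maps $I_1$ to a valid magnitude-one swap set on $[4t_z]$, and the induced pair discrepancies on $S^{z+1}$ equal those of $S^z$ under the translated swaps; hence the $\kappa=1$ term is at most $d_z$. A symmetric translation argument for the second batch gives the same bound for the $\kappa=2$ term.

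The main step is the bound $D(\cdot;I_3;t_{z+1})\le 2$. Only three companion pairs of $S^{z+1}$ can be touched by $I_3$: the last pair $P_\epsilon=(\{1,5\cdot 2^{z+1}-4\},\{5\cdot 2^z-2,5\cdot 2^z-1\})$, together with the two pairs $P_\alpha,P_\gamma$ obtained by translating the $S^z$-pair $(\{1,4t_z\},\{2t_z,2t_z+1\})$ by $+1$ and by $5\cdot 2^z-1$, respectively. The key structural invariant, proved by induction on $z$ from~\eqref{eq:construction} with base case $S^2_1=\{1,16\}$, is that in $S^z$ the elements $1$ and $4t_z$ always lie in the \emph{same} subset of a single companion pair. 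Consequently the two boundary transpositions that touch $P_\alpha$ (and symmetrically $P_\gamma$) act on a single subset, so their contributions partially cancel. Writing indicators $(x_a,x_b,x_c,x_d,x_e)\in\{0,1\}^5$ for the five candidate boundary swaps, subject to the element-disjointness constraints $x_b+x_c\le 1$ and $x_c+x_d\le 1$, the three per-pair discrepancies simplify to $|x_b-x_a|$, $|x_e-x_d|$, and $|x_a+x_b-x_d-x_e|$; a finite case check over the feasible indicator vectors shows their sum never exceeds $2$. Combining the three bounds yields $d_{z+1}\le 2d_z+2$.

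The main difficulty is precisely the ``same-subset'' invariant above: without the particular form $\{1,4t_z\}$ of the new pair added at each recursive step, the two boundary transpositions acting on $P_\alpha$ would add rather than cancel, producing a worst-case contribution of $4$ instead of $2$ and destroying the recursion.
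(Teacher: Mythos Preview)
Your proof is correct, and the route differs from the paper's in a useful way. The paper argues by case analysis on the post-swap discrepancy of the last companion pair $P_\epsilon$, taking values $0,1,2$; within each case it further splits on how many of the boundary swaps on each side are present, and bounds the \emph{batch} discrepancies (which include the effect of the boundary swaps) by $d_z$ or $d_z+1$ accordingly. In particular, when both boundary swaps on one side are present the paper asserts the corresponding batch discrepancy is at most $d_z$, which is precisely where the ``same-subset'' invariant is being used---but only implicitly.

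Your decomposition $I=I_1\cup I_2\cup I_3$ together with the triangle inequality cleanly separates the internal contributions (each $\le d_z$ by translation) from the boundary contribution, isolating the latter into a self-contained finite maximization $\max\bigl(|x_b-x_a|+|x_e-x_d|+|x_a+x_b-x_d-x_e|\bigr)=2$ over $\{0,1\}^4$. This is more modular: it replaces the paper's multi-level case split by a single sub-additivity step and a small check. Crucially, you make the invariant ``$1$ and $4t_z$ lie in the same subset of $S^z$'' explicit and prove it by induction from~\eqref{eq:construction} and the base case~\eqref{eq:comp16}; this is exactly the structural fact that drives both arguments, and your formulation makes its role transparent (without it the $P_\alpha$ and $P_\gamma$ contributions would be $x_a+x_b$ and $x_d+x_e$, giving a boundary bound of $4$ and breaking the recursion).
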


\begin{IEEEproof}
The proof relies on separately examining scenarios for the discrepancy change induced by an allowed set of swaps $I_{t,1}$. The cases include:
\begin{enumerate}
    \item $|\Sigma(S^{\prime z+1}_{5\cdot 2^{z}-3})-\Sigma(S^{\prime z+1}_{5 \cdot 2^{z}-2})|=0$;
    \item $|\Sigma(S^{\prime z+1}_{5 \cdot 2^{z}-3})-\Sigma(S^{\prime z+1}_{5 \cdot 2^{z}-2})|=1$;
    \item $|\Sigma(S^{\prime z+1}_{5 \cdot 2^{z}-3})-\Sigma(S^{\prime z+1}_{5 \cdot 2^{z}-2})|=2$.
\end{enumerate}

\textbf{Case 1):} In this case we have 
\begin{align*}
&|I_{t,1}\cap\{(1,2),(5 \cdot 2^{z}-3,5 \cdot 2^{z}-2)\}|\\
=&|I_{t,1}\cap\{(5 \cdot 2^{z}-1,5 \cdot 2^{z}),(5 \cdot 2^{z+1}-3,5 \cdot 2^{z+1}-4)\}|. 
\end{align*}

If $|I_{t,1}\cap\{(1,2),(5\cdot 2^{z}-3,5 \cdot 2^{z}-2)\}|=0$, then the discrepancy corresponding to the sets $(S^{\prime z+1}_{1},\ldots,S^{\prime z+1}_{5\cdot 2^{z-1}-2})$, as well as to the sets  $(S^{\prime z+1}_{5 \cdot 2^{z-1}-1},\ldots,S^{\prime z+1}_{5 \cdot 2^{z}-4}),$ is at most $d_z$. Thus, the total discrepancy is at most $2d_z$. 

If $|I_{t,1}\cap\{(1,2),(5 \cdot 2^{z}-3,5 \cdot 2^{z}-2)\}|=1$, then the discrepancy corresponding to the sets $(S^{\prime z+1}_{1},\ldots,S^{\prime z+1}_{5 \cdot 2^{z-1}-2})$, as well as to the sets $(S^{\prime z+1}_{5\cdot 2^{z-1}-1},\ldots,S^{\prime z+1}_{5 \cdot 2^{z}-4})$ is at most $d_z+1$. Hence, the total discrepancy is at most $2d_z+2$. 

Similarly, if $|I_{t,1}\cap\{(1,2),(5 \cdot 2^{z}-3,5 \cdot 2^{z}-2)\}|=2$, then the set discrepancies corresponding to $(S^{\prime z+1}_{1},\ldots,S^{\prime z+1}_{5\cdot 2^{z-1}-2})$ and to $(S^{\prime z+1}_{5 \cdot 2^{z-1}-1},\ldots,S^{\prime z+1}_{5 \cdot 2^{z}-4})$ are at most $d_z$.
Then, 
the total discrepancy of $(S^{\prime z+1}_{1},\ldots,S^{\prime z+1}_{5 \cdot 2^{z}-2})$ is at most $2d_z$.

\textbf{Case 2):} For this case, exactly one of the two conditions holds:
\begin{align*}
&|I_{t,1}\cap\{(1,2),(5 \cdot 2^{z}-3,5 \cdot 2^{z}-2)\}|=1, \text{or}\\ 
&|I_{t,1}\cap\{(5 \cdot 2^{z}-1,5 \cdot 2^{z}),(5 \cdot 2^{z+1}-3,5 \cdot 2^{z+1}-4)\}|=1.
\end{align*}
By symmetry, we can assume that $|I_{t,1}\cap\{(1,2),(5 \cdot 2^{z}-3,5 \cdot 2^{z}-2)\}|=1$. Then exactly one of the following holds:
\begin{align*}
&|I_{t,1}\cap\{(5 \cdot 2^{z}-1,5 \cdot 2^{z}),(5 \cdot 2^{z+1}-3,5 \cdot 2^{z+1}-4)\}|=0, \\ 
&|I_{t,1}\cap\{(5 \cdot 2^{z}-1,5 \cdot 2^{z}),(5 \cdot 2^{z+1}-3,5 \cdot 2^{z+1}-4)\}|=2.
\end{align*}
Hence, the discrepancy induced by the sets $(S^{\prime z+1}_{1},\ldots,S^{\prime z+1}_{5 \cdot 2^{z-1}-2})$ is at most $d_z+1$ and the discrepancy induced by $(S^{\prime z+1}_{5 \cdot 2^{z-1}-1},\ldots,S^{\prime z+1}_{5 \cdot 2^{z}-4})$ is at most $d_z$. Hence, the total discrepancy for all sets $(S^{\prime z+1}_{1},\ldots,S^{\prime z+1}_{5 \cdot 2^{z}-2})$ is at most $d_z+1+d_z+1=2d_z+2,$ where the additional $1$ comes from the discrepancy of $(S^{\prime z+1}_{5\cdot 2^{z}-3},S^{\prime z+1}_{5\cdot 2^{z}-2})$.

\textbf{Case 3):} For this case, exactly one of the two conditions holds:
\begin{align*}
&|I_{t,1}\cap\{(1,2),(5 \cdot 2^{z}-3,5 \cdot 2^{z}-2)\}|=2, \text{ or }\\
&|I_{t,1}\cap\{(5 \cdot 2^{z}-1,5 \cdot 2^{z}),(5 \cdot 2^{z+1}-3,5 \cdot 2^{z+1}-4)\}|=2.
\end{align*}
Similarly to Case 2), by symmetry we can assume that $|I_{t,1}\cap\{(1,2),(5 \cdot 2^{z}-3,5 \cdot 2^{z}-2)\}|=2$ and thus $|I_{t,1}\cap\{(5 \cdot 2^{z}-1,5 \cdot 2^{z}),(5 \cdot 2^{z+1}-3,5 \cdot 2^{z+1}-4)\}|=0$. Then, the discrepancy induced by $(S^{\prime z+1}_{1},\ldots,S^{\prime z+1}_{5 \cdot 2^{z-1}-2})$ is at most $d_z$, while the discrepancy induced by $(S^{\prime z+1}_{5 \cdot 2^{z-1}-1},\ldots,S^{\prime z+1}_{5 \cdot 2^{z}-4})$ is at most $d_z$. Hence, the total discrepancy of all sets $(S^{\prime z+1}_{1},\ldots,S^{\prime z+1}_{5 \cdot 2^{z}-2})$ is at most $d_z+d_z+2=2d_z+2$, where the additional term equal to $2$ comes from the discrepancy of $(S^{\prime z+1}_{5\cdot 2^{z}-3},S^{\prime z+1}_{5\cdot 2^{z}-2})$.
\end{IEEEproof}
By Lemma~\ref{lemma:dzrecursion} and the fact that $d_2=6$, we arrive at
\begin{align}\label{eq:recur_ub}
d_z\le 2^{z+1}-2=\frac{8t}{5}-\frac{2}{5}.
\end{align}

\section{A Lower Bound on $D^*(t,1)$}
\label{sec:lower_bound}
We prove next the bound in~\eqref{eq:lower}, which is significantly more challenging to establish than the upper bound. Before proceeding with the proof, we first define an \emph{undirected graph} $G_{\text{swp}}(I_{t,1})$ that describes a collection of swaps $I_{t,1}\in\mathcal{I}_{t,1}$, where the nodes represent the companion sets and the edges represent the swaps in $I_{t,1}$. The number of edges in $G_{\text{swp}}(I_{t,1})$ is the number of swaps in $I_{t,1}$, which, as shown later, equals one half of the discrepancy when $I_{t,1}$ is chosen to maximize the discrepancy. To get a lower bound on $D^*(t,1)$, the idea is to obtain a lower bound on the number of edges in $G_{\text{swp}}(I_{t,1})$. To this end, we also define a \emph{directed graph} $G_{pot}(I_{t,1})$ with the same vertex set as $G_{\text{swp}}(I_{t,1})$ that represents the companion sets. Different from the undirected edges in $G_{\text{swp}}$, the directed arcs in $G_{pot}(I_{t,1})$ describe the potential discrepancy change for swaps \emph{not included in} $I_{t,1}$. 
We observe that the number of edges incident to a node in $G_{\text{swp}}(I_{t,1})$ is at least the number of input arcs to the same node in $G_{\text{pot}}(I_{t,1})$, which follows from the observation that the swaps not included in $I_{t,1}$ cannot increase the highest possible discrepancy when $I_{t,1}$ is selected accordingly (i.e., so as to maximize the discrepancy). This gives a lower bound on the number of edges in $G_{\textup{swp}}(I_{t,1})$. In the following, we give formal definitions of the graphs $G_{\text{swp}}(I_{t,1})$ and $G_{\text{pot}}(I_{t,1})$. We use brackets $\{v_{i_1},v_{i_2}\}$ to denote edges in an undirected graph and parentheses $(v_{i_1},v_{i_2})$ to denote an arc in a directed graph. We note that the parentheses notation for denoting arcs is not to be confused with the parentheses notation for denoting swaps.  

Fix the defining sets $(S_1,\ldots,S_{2t})\in \mathcal{S}_t$. For any collection of allowed swaps $I_{t,1}\in\mathcal{I}_{t,1}$, define an unweighted, undirected graph $G_{\text{swp}}(I_{t,1})=(V_{\text{swp}}(I_{t,1}),E_{\text{swp}}(I_{t,1}))$ that describes the swap set $I_{t,1}$. Specifically, the node set equals $V_{\text{swp}}(I_{t,1})=\{v_i\}^t_{i=1},$ where node $v_i$ is indexed by the companion sets $(S_{2i-1},S_{2i})$. For any two nodes $v_{i_1},v_{i_2}\in V_{\text{swp}}(I_{t,1})$, there exists an edge $\{v_{i_1},v_{i_2}\}$ between $v_{i_1}$ and $v_{i_2}$ if there exists a swap $(i,i+1)\in I_{t,1}$ such that $i$ and $i+1$ are in the sets $S_{2i_1-1}\cup S_{2i_1}$ and $S_{2i_2-1}\cup S_{2i_2}$, respectively. Note that $i_1$ and $i_2$ can be the same, meaning that $G_{\text{swp}}(I_{t,1})$ is allowed to have self loops. In addition, multiple edges are also allowed between the same pair of nodes. An example graph $G_{\text{swp}}(I_{t,1})$ for $t=3$ and $I_{t,1}=\{(2,3),(8,9),(11,12)\}$ is shown in Figure~\ref{fig:graph_example}. Note that a similar, \emph{but different}, definition of a swap graph $G_{\text{swp}}(I_{t,1})$ was used in~\cite{pan2022balanced}.

\begin{figure}[t]
\centering
\includegraphics[width=0.9\linewidth]{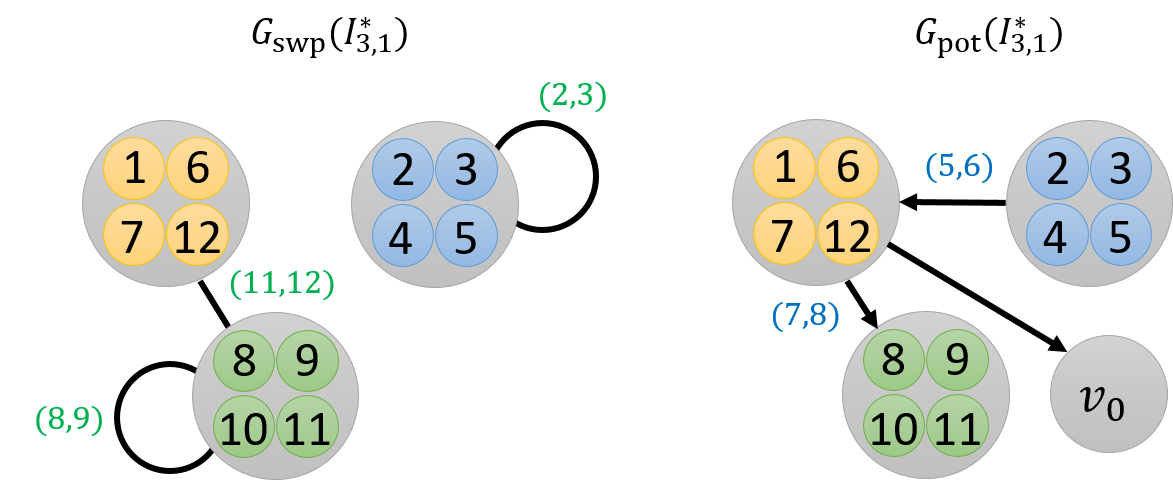}
\caption{Two graphs $G_{\text{\text{swp}}}(I_{3,1}^*)$ and $G_{\text{\text{pot}}}(I_{3,1}^*)$ defined for the worst-case popularity swaps $I_{3,1}^*=\{(2,3),(8,9),(11,12)\}$ (i.e., $p=1$, $t=3$) obtained by computer s. Each gray circle contains $4$ elements, which constitute a pair of companion sets, and represents the same node in both graphs. The three edges in $G_{\text{\text{swp}}}(I_{3,1}^*)$ correspond to the three swaps (marked in green in $G_{\text{\text{swp}}}(I_{3,1}^*)$) in $I_{3,1}^*$. The three arcs correspond to the remaining swaps (marked in blue in $G_{\text{\text{pot}}}(I_{3,1}^*)$) not included in $I_{3,1}^*$ that cause potential discrepancy increase for different pairs of companion sets.}
\label{fig:graph_example}
\end{figure}

In addition to $G_{\text{swp}}(I_{t,1})$, we also define a directed graph $G_{\text{pot}}(I_{t,1})=(V_{\text{pot}}(I_{t,1}),E_{\text{pot}}(I_{t,1}))$ that describes the \text{potential} discrepancy change induced by swaps $(i,i+1)$ that are not included in $I_{t,1}$. The node set $V_{\text{pot}}(I_{t,1})=V_{\text{swp}}(I_{t,1})\cup\{v_0\}$ equals the node set $V_{\text{swp}}(I_{t,1}),$ but is augmented by an auxiliary node $v_0$ that acts as a ``virtual'' companion set and is discussed later. To define the directed edge set (i.e., arc set) $E_{\text{pot}}(I_{t,1})$, we consider all potential swaps $(i,i+1)\notin I_{t,1}$ for $i\in [4t-1]$. Note that the potential swap $(i,i+1)\notin I_{t,1}$ and the selected swaps in $I_{t,1}$ are allowed to share an element. For a swap $(i,i+1)\notin I_{t,1}$, an arc  $(v_{i_1},v_{i_2})$ directed from $v_{i_1}$ to $v_{i_2}$, $i_1,i_2\in[t]$, exists if  
$(i,i+1)$ satisfies one of the following: 
\begin{enumerate}
    \item $i\in S_{2i_1}$, $i+1\in (S_{2i_2-1}\cup S_{2i_2})\backslash S_{2i_1}$, and $\Sigma(S_{2i_1-1}^\prime)<\Sigma(S_{2i_1}^\prime)$ after performing the swaps in $I_{t,1}$.
    \item $i+1\in S_{2i_1}$, $i\in (S_{2i_2-1}\cup S_{2i_2})\backslash S_{2i_1}$, and $\Sigma(S_{2i_1-1}^\prime)>\Sigma(S_{2i_1}^\prime)$ after performing the swaps in $I_{t,1}$.
    \item $i\in S_{2i_1-1}$, $i+1\in (S_{2i_2-1}\cup S_{2i_2})\backslash S_{2i_1-1}$, and $\Sigma(S_{2i_1-1}^\prime)>\Sigma(S_{2i_1}^\prime)$ after performing the swaps in $I_{t,1}$.
    \item $i+1\in S_{2i_1-1}$, $i\in (S_{2i_2-1}\cup S_{2i_2})\backslash S_{2i_1-1}$, and $\Sigma(S_{2i_1-1}^\prime)<\Sigma(S_{2i_1}^\prime)$ after performing the swaps in $I_{t,1}$.
    \item $i\in S_{2i_1-1}$, $i+1\in (S_{2i_2-1}\cup S_{2i_2})\backslash S_{2i_1-1}$, and $\Sigma(S_{2i_1-1}^\prime)=\Sigma(S_{2i_1}^\prime)$ after performing the swaps in $I_{t,1}$.
    \item $i+1\in S_{2i_1}$, $i\in (S_{2i_2-1}\cup S_{2i_2})\backslash S_{2i_1}$, and $\Sigma(S_{2i_1-1}^\prime)=\Sigma(S_{2i_1}^\prime)$ after performing the swaps in $I_{t,1}$.
\end{enumerate}
Intuitively, $(v_{i_1},v_{i_2})\in E_{\text{pot}}(I_{t,1})$ if $i$ and $i+1$ are in $S_{2i_1-1}\cup S_{2i_1}$ and $S_{2i_2-1}\cup S_{2i_2}$, respectively, and the swap $(i,i+1)$ increases the discrepancy $|\Sigma(S_{2i_1-1}^\prime)-\Sigma(S_{2i_1}^\prime)|$ of the companion sets $(S_{2i_1-1}^\prime,S_{2i_1}^\prime)$, whenever the discrepancy of these two sets is $>0$. In addition, if the discrepancy of the companion sets $(S_{2i_1-1}^\prime,S_{2i_1}^\prime)$ equals $0$, we have $(v_{i_1},v_{i_2})\in E_{\text{pot}}(I_{t,1})$ only when the swap $(i,i+1)$ leads to a positive set difference $\Sigma(S_{2i_1-1}^\prime)-\Sigma(S_{2i_1}^\prime)$. Note that conditions $5)$ and $6)$ are introduced to ensure that swaps corresponding to multiple arcs can simultaneously increase the set discrepancy of the same pair of companion sets. Again, we allow $i_1=i_2$, i.e., we allow self-loops in $G_{\text{pot}}(I_{t,1})$. Moreover, multiple arcs $(v_{i_1},v_{i_2})\in E_{\text{pot}}$ are also allowed. 

In addition to the arcs among nodes $\{v_i\}^t_{i=1}$ in $G_{\text{pot}}(I_{t,1})$, we add an arc $(v_{i_1},v_0)\in E_{\text{pot}}(I_{t,1})$ if exactly one of the following conditions holds:
\begin{enumerate}
    \item $1\in S_{2i_1-1}\cup S_{2i_1}$ and the swap $(0,1)$ increases the discrepancy of the companion sets $(S_{2i_1-1}^\prime,S_{2i_1}^\prime),$ when $\Sigma(S_{2i_1-1}^\prime)\ne\Sigma(S_{2i_1}^\prime)$;
    \item $1\in S_{2i_1},$ when $\Sigma(S_{2i_1-1}^\prime)=\Sigma( S_{2i_1}^\prime)$.
\end{enumerate}
Note that the swap $(0,1)$ is not allowed in $I_{t,1}$ and is only included in $E_{\text{pot}}(I_{t,1})$ for the purposes of simplifying the analysis. Similarly, we allow an arc $(v_{i_2},v_0)\in E_{\text{pot}}(I_{t,1})$ for the swap $(4t,4t+1)$ provided that:
\begin{enumerate}
    \item $4t\in S_{2i_1-1}\cup S_{2i_1}$ and the swap $(4t,4t+1)$ increases the discrepancy of the companion sets $(S_{2i_1-1}^\prime,S_{2i_1}^\prime),$ when $\Sigma(S_{2i_1-1}^\prime)\ne\Sigma(S_{2i_1}^\prime)$;
    \item $4t\in S_{2i_1},$ when $\Sigma(S_{2i_1-1}^\prime)=\Sigma( S_{2i_1}^\prime)$.
\end{enumerate}
An example of the graph $G_{\text{pot}}(I_{t,1})$ for $t=3$ and $I_{t,1}=\{(2,3),(8,9),(11,12)\}$ is shown in Figure~\ref{fig:graph_example}.

Given the defining sets $(S_1,$ $\ldots, S_{2t})$, let $I_{t,1}^*$ be a swap-set of the smallest size among all swap-sets that lead to the worst-case total discrepancy, i.e.,
$$
I_{t,1}^*\in \argmax_{I_{t,1}\in\mathcal{I}_{t,1}} D(S_1,\ldots,S_{2t};I_{t,1};t).
$$
We have the following lemma.
\begin{lemma}\label{lem:optimalswap}
Each pair $(i,i+1)\in I_{t,1}^*$ increases the total discrepancy by $2$, i.e., 
\begin{align}\label{eq:discrepancy}
    D(S_1,\ldots, S_{2t};I_{t,1}^*;t)=2|I_{t,1}^*|.
\end{align} 
\end{lemma}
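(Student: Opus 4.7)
The plan is to analyze the marginal contribution of each individual swap in $I_{t,1}^*$ to the total discrepancy, and then use the joint maximality and minimality of $I_{t,1}^*$ to pin down every such contribution to exactly $2$. To this end, I would first introduce the signed pair discrepancy $\delta_j := \Sigma(S_{2j-1}^\prime) - \Sigma(S_{2j}^\prime)$, so that $D(S_1,\ldots,S_{2t};I_{t,1}^*;t) = \sum_{j=1}^t |\delta_j|$, and for each swap $e = (i,i+1)$ record its additive effect $\epsilon_j(e)$ on $\delta_j$. A short case split on the location of $\{i,i+1\}$ within the defining sets yields: (a) $\epsilon_j(e) = 0$ for all $j$ if $i$ and $i+1$ lie in the same defining set (same-set swap); (b) a single nonzero value $\epsilon_j(e) \in \{+2,-2\}$ if $i$ and $i+1$ lie in different defining sets of the same companion pair (same-pair swap); and (c) exactly two nonzero values $\epsilon_{j_1}(e), \epsilon_{j_2}(e) \in \{+1,-1\}$ if $i$ and $i+1$ lie in distinct pairs $j_1 \neq j_2$ (cross-pair swap). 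Because the defining sets are balanced, $\delta_j$ starts at $0$ and therefore remains integer-valued after any sequence of swaps.

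Next, for every $e \in I_{t,1}^*$ I would study the marginal change $\Delta D(e) := D(I_{t,1}^*) - D(I_{t,1}^* \setminus \{e\}) = \sum_{j=1}^t \bigl(|\delta_j| - |\delta_j - \epsilon_j(e)|\bigr)$. A direct calculation using the integrality of $\delta_j$ and the fact that $|\epsilon_j(e)| \in \{0,1,2\}$ shows that each summand equals $+|\epsilon_j(e)|$ iff $\delta_j$ and $\epsilon_j(e)$ share a sign with $|\delta_j| \geq |\epsilon_j(e)|$, equals $0$ only in the boundary case $|\delta_j|=1$ and $|\epsilon_j(e)|=2$ with matching signs, and equals $-|\epsilon_j(e)|$ otherwise. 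Summing over the at most two affected pairs then yields $\Delta D(e) \in \{-2,0,+2\}$ in every case. Maximality of $D(I_{t,1}^*)$ rules out $\Delta D(e) < 0$ (else $I_{t,1}^* \setminus \{e\}$ would strictly beat $I_{t,1}^*$), while minimality of $|I_{t,1}^*|$ rules out $\Delta D(e) = 0$ (else $I_{t,1}^* \setminus \{e\}$ would be a strictly smaller maximum-discrepancy swap set). Hence $\Delta D(e) = 2$ for every $e \in I_{t,1}^*$, and in particular $I_{t,1}^*$ contains no same-set swaps.

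Finally, the condition $\Delta D(e) = 2$ forces, for every pair $j$ with $\epsilon_j(e) \neq 0$, the sign-alignment $\mathrm{sign}(\delta_j) = \mathrm{sign}(\epsilon_j(e))$. Applying this to every $e \in I_{t,1}^*$ simultaneously, all nonzero contributions to a given $\delta_j$ share the sign of $\delta_j$, so no cancellation occurs and $|\delta_j| = \sum_{e \in I_{t,1}^*} |\epsilon_j(e)|$. Interchanging the order of summation then gives
\[
D(I_{t,1}^*) = \sum_{j=1}^t |\delta_j| = \sum_{e \in I_{t,1}^*} \sum_{j=1}^t |\epsilon_j(e)| = 2|I_{t,1}^*|,
\]
since each same-pair swap contributes $2$ to the inner sum (a single term of magnitude $2$) and each cross-pair swap also contributes $1+1=2$. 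The step I expect to require the most care is the $\Delta D(e)$ case analysis itself, and in particular the need to argue in the cross-pair case that both $\pm 1$ summands must simultaneously equal $+1$ for $\Delta D(e)=2$, as well as the clean invocation of minimality to eliminate the $\Delta D(e)=0$ scenario despite the potentially complex coupling among multiple swaps that affect the same companion pair.
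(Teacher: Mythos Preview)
Your proposal is correct and follows essentially the same approach as the paper: analyze the marginal contribution of each swap, observe it lies in $\{-2,0,2\}$, and use maximality and minimality of $I_{t,1}^*$ to force each contribution to be $2$. Your treatment is in fact more careful than the paper's terse proof---in particular, your sign-alignment argument rigorously justifies why the individual marginal contributions of $2$ actually sum to $D(I_{t,1}^*)=2|I_{t,1}^*|$, a step the paper states without explanation.
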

\begin{proof}
Note that a swap $(i,i+1)$ either increases or decreases the discrepancy of two companion sets $(S_{2i_1-1},S_{2i_1})$ and $(S_{2i_2-1},S_{2i_2})$ that contain $i$ and $i+1$, respectively, by $1$, unless $i$ and $i+1$ are in the same set $S_{2i_1-1}$ or $S_{2i_1}$, which is a case that can be ignored. Therefore, the contribution of the swap $(i,i+1)\in I_{t,1}^*$ to the total discrepancy is $2$, $0$, or $-2$. Since $I_{t,1}^*$ is of smallest size, each swap in $I_{t,1}^*$ contributes $2$ to the total set discrepancy. Hence,~\eqref{eq:discrepancy} holds. 
\end{proof}
For simplicity, throughout the rest of this section, we use $I^*$ to denote $I^*_{t,1}$. We also write $G_{\text{swp}}=G_{\text{swp}}(I^*)=(V_{\text{swp}},E_{\text{swp}})$, and $G_{\text{pot}}=G_{\text{pot}}(I^*)=(V_{\text{pot}},E_{\text{pot}})$. 

The graph $G_{\text{swp}}$ can be partitioned into a set of connected components $G^j_{\text{swp}}=(V^j_{\text{swp}},E^j_{\text{swp}})$, $j\in[J]$, as illustrated by the example in Figure~\ref{fig:graph_example}.
For $v_i\in V_{\text{swp}}\cup\{v_0\}$, let 
$$
d^{\text{pot}}_{\text{in}}(v_i)=|\{(v_j,v_i):(v_j,v_i)\in E_{\text{pot}},v_j\in V_{\text{swp}}\}|
$$
be the in-degree of node $v_i$ and 
$$
d^{\text{pot}}_{\text{out}}(v_i)=|\{(v_i,v_j):(v_i,v_j)\in E_{\text{pot}},v_j\in V_{\text{swp}}\}|
$$
be the out-degree of node $v_i$ in $G_{\text{pot}}$. Recall that except for the auxiliary node $v_0$, $V_{\text{swp}}$ and $V_{\text{pot}}$ share the same vertices $\{v_i\}^t_{i=1}$. In addition, for any $v_i \in V_{\text{swp}}$, let
$$
d(v_i)=|\{\{v_i,v\}:\{v_i,v\}\in E_{\text{swp}},v\in V_{\text{swp}}\}|
$$
denote the number of edges in $E_{\text{swp}}$ that are incident to $v_{i}$. 
In what follows, we interchangeably use $v_i$ to denote a node in $V_{\text{swp}}$ or a node in $V_{\text{pot}}$. We start by proving the following lemma.
\begin{lemma}\label{lem:removeandadd}
 For any node $v_{i} \in V_{\text{swp}}$, $d^{\textup{pot}}_{\textup{in}}(v_{i})\le d(v_{i})$. 
\end{lemma}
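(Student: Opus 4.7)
The plan is to establish the bound by constructing an injection from incoming arcs at $v_i$ in $G_{\text{pot}}$ to edges in $G_{\text{swp}}$ incident to $v_i$. For each incoming arc $(v_j, v_i) \in E_{\text{pot}}$, let $(k, k+1) \notin I^*$ denote the associated potential swap. By the defining conditions (1)--(6), exactly one of $k$ and $k+1$ lies in the companion sets of $v_i$; up to symmetry, I assume $k+1 \in S_{2i-1} \cup S_{2i}$ and $k$ lies in the companion sets of $v_j$.

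The direct matching handles the case $(k+1, k+2) \in I^*$: this swap uses the element $k+1 \in v_i$'s companion sets and therefore yields an edge in $E_{\text{swp}}$ incident to $v_i$, to which I assign the arc.

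The remaining situations I would treat by a contradiction-style argument built on the optimality of $I^*$. If $(k+1, k+2) \notin I^*$ and the swap $(k, k+1)$ is addable (i.e., also $(k-1, k) \notin I^*$), then by the arc condition adding $(k, k+1)$ would raise $v_j$'s discrepancy by $1$. Since $I^*$ realizes the maximum discrepancy, the net change from adding this swap must be nonpositive; combined with Lemma~\ref{lem:optimalswap} (each swap in $I^*$ contributes $\pm 2$ to individual companion-pair discrepancies), the only consistent outcome is that the effect at $v_i$ is $-1$. Because the defining sets are balanced by assumption, any nonzero discrepancy at $v_i$ after applying $I^*$ must be produced by swaps in $I^*$ with an endpoint in $v_i$'s companion sets, and such a swap supplies the required incident edge. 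When instead $(k, k+1)$ is blocked only by $(k-1, k) \in I^*$, I would perform a local exchange, removing $(k-1, k)$ from $I^*$ and tentatively adding $(k, k+1)$, and analyze the net change in total discrepancy so as to either contradict the maximality or minimality of $I^*$, or expose an incident edge at $v_i$ arising from a different swap in $I^*$.

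The principal obstacle is establishing injectivity, since distinct arcs could a priori compete for the same edge at $v_i$. I would resolve this by tagging each arc with the specific element $k+1 \in v_i$'s companion sets that it uses and associating it with an incident edge through a consistent rule, such as first preferring the adjacent swap $(k+1, k+2) \in I^*$ when it exists and otherwise falling back on an imbalance-producing swap in $I^*$ selected via a fixed tie-breaking order. Appealing to the sign and membership conditions in the arc definitions, and in particular the split between conditions (1)--(4) (nonzero discrepancy at $v_j$) and (5)--(6) (zero discrepancy at $v_j$), should segregate the cases cleanly. Making this bookkeeping airtight, especially in the presence of self-loops and when conditions (5)--(6) govern, is the most technical part of the proof.
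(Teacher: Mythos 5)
Your plan is structurally different from the paper's proof (a per-arc injection into incident edges rather than a global exchange), but as written it has a genuine gap at exactly the point you flag: the injectivity is never established, and your fallback argument cannot establish it. In the branch where the potential swap $(k,k+1)$ is addable, your reasoning (net change nonpositive, hence effect at $v_i$ is $-1$, hence $v_i$'s pair has nonzero discrepancy, hence some swap of $I^*$ touches $v_i$'s companion sets) only proves $d(v_i)\ge 1$; it does not assign distinct incident edges to distinct arcs. Distinct incoming arcs can even be tagged with the same element of $v_i$'s companion sets: if $m\in S_{2i-1}\cup S_{2i}$ and neither $(m-1,m)$ nor $(m,m+1)$ lies in $I^*$, both swaps can simultaneously generate arcs into $v_i$, because the arc conditions (1)--(6) constrain only the tail nodes' sums, not $v_i$'s. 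Both such arcs miss your direct-matching rule (there is no $I^*$-swap at $m$) and land in the fallback, competing for whatever incident edges happen to exist; a ``fixed tie-breaking order'' is bookkeeping, not a proof that enough edges exist. The remaining branches are likewise only announced as plans: the case where $(k,k+1)$ is blocked by $(k-1,k)\in I^*$ (``perform a local exchange and analyze''), self-loop arcs (where both $k$ and $k+1$ lie in $v_i$'s sets, contradicting your ``exactly one'' reduction), and arcs created under the zero-discrepancy conditions (5)--(6).

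The paper avoids all per-arc bookkeeping with an aggregate exchange. Assume $d^{\text{pot}}_{\text{in}}(v_i)>d(v_i)$; delete every swap of $I^*$ incident to $v_i$, which by Lemma~\ref{lem:optimalswap} lowers the total discrepancy by exactly $2d(v_i)$; then add the swaps of all ingoing arcs at $v_i$. Each added swap gains $+1$ at its tail by the arc conditions, and it also gains $+1$ at $v_i$: after the deletions, two added swaps cannot push $\Sigma(S'_{2i-1})-\Sigma(S'_{2i})$ in opposite directions, since otherwise one of them would already have increased $v_i$'s discrepancy before the exchange and would have been included in $I^*$ (this is precisely what conditions (5)--(6) are for, and why they appear in the arc definition). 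The exchange therefore changes the total discrepancy by $2d^{\text{pot}}_{\text{in}}(v_i)-2d(v_i)>0$, contradicting the maximality of $I^*$. If you want to rescue your route you would in effect have to reprove this aggregate comparison arc by arc, which is strictly harder; I recommend switching to the remove-all/add-all exchange.
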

\begin{IEEEproof}
Note that every edge in $E_{\text{swp}}$ corresponds to a swap in $I^*$ and every arc in $E_{\text{pot}}$ corresponds to a swap not in $I^*$. 
Suppose to the contrary that $d^{\textup{pot}}_{\textup{in}}> d(v_i)$, and remove all swaps $(i,i+1)\in I^*$ that correspond to edges in 
$$
E^{\text{swp}}(v_i)=\{\{v_i,v\}:\{v_i,v\}\in E_{\text{swp}},v\in V_{\text{swp}}\},
$$
which are edges incident to $v_i$. Then, add swaps that correspond to ingoing arcs in 
$$
E^{\text{pot}}(v_i)=\{(v,v_i):(v,v_i)\in E_{\text{pot}},v\in V_{\text{swp}}\}
$$ 
that emanate from nodes in $V_{\textup{swp}}$ and end in node $v_{i}$. We show next that each added swap contributes $2$ to the total discrepancy. Note that by the definition of an arc in $E_{\text{pot}}$, every swap $(i,i+1)$ that corresponds to an arc $(v_{k_1},v_{i})\in E^{\text{pot}}(v_i)$ increases the discrepancy of the companion set $(S_{2k_1-1}^\prime,S_{2k_1}^\prime)$. It therefore suffices to show that the same swap $(i,i+1)\in E^{\text{pot}}(v_i)$ contributes $+1$ to the set discrepancy of $(S_{2i-1}^\prime,S_{2i}^\prime)$ after removing the swaps corresponding to the edges in $E^{\text{swp}}(v_i)$.  
Otherwise, there would be two arcs $(v_{k_1},v_{i}),(v_{k_2},v_{i})\in E^{\text{pot}}(v_i)$ ($k_1$ and $k_2$ can be equal), such that their corresponding swaps $(i,i+1)$ and $(j,j+1)$ contribute $+1$ and $-1$ to the set discrepancy of $(S_{2i-1}^\prime, S_{2i}^\prime)$, respectively, after swap removal and addition. This implies that the swaps $(i,i+1)$ and $(j,j+1)$ contribute $+1$ and $-1$ to the set discrepancy of $(S_{2i-1}^\prime, S_{2i}^\prime)$, respectively, or vice versa, before swap removal and addition. The swap $(i,i+1)$ or $(j,j+1)$ that increases the set discrepancy of $(S_{2i-1}^\prime, S_{2i}^\prime)$ before swap removal and addition should have been included in $I^*$ (note that this swap does not share an element with the swaps in $I^*$), which contradicts the fact that $(i,i+1),(j,j+1)\notin I^*$. Therefore, each added swap contributes $2$ to the total set discrepancy. Moreover, added swaps do not share an element with each other or with other swaps in $I^*$, because two swaps that share an element cannot both contribute $2$ to the total discrepancy.

Therefore, the added $d^{\textup{pot}}_{\textup{in}}(v_i)$ swaps contribute $2d^{\textup{pot}}_{\textup{in}}(v_i)$ to the total discrepancy, while the removed swaps reduce the total discrepancy by $2 d(v_i)$. Since $d^{\textup{pot}}_{\textup{in}}(v_i) > d(v_i)$, this implies that the total set discrepancy increases after swap removal and addition, which contradicts the assumption of maximality of the total set discrepancy induced by $I^*$. Hence, $d^{\textup{pot}}_{\textup{in}}(v_i)\le d(v_i)$.
\end{IEEEproof}
Define next
\begin{align*}
d^{\text{swp}}(v_i)=&|\{\{v_i,v\}:\{v_i,v\}\in E_{\text{swp}},v\in V_{\text{swp}}\backslash\{v_i\}\}|\\
&+2\cdot\mathbbm{1}(\{v_i,v_i\}\in E_{\text{swp}}), 
\end{align*}
where $\mathbbm{1}(P)=1$ if the event $P$ holds and $\mathbbm{1}(P)=0$ otherwise. Note that each node has at most one self-loop. Hence, we have
\begin{align}\label{eq:dswpe}
\sum^t_{i=1}d^{\text{swp}}(v_i)=2|E_{\text{swp}}|.    
\end{align}
Moreover, by definition of $d(v_i)$ and $d^{\text{swp}}(v_i)$, 
\begin{align}\label{eq:dlessthandswp}
    d(v_i)\le d^{\text{swp}}(v_i).
\end{align}
In the following, we show that 
\begin{align}\label{eq:ddoutgreaterthan3}
d^{\text{swp}}(v_i)+d^{\text{pot}}_{\text{out}}(v_i)\ge 3.
\end{align}
We split the companion sets $(S_{2i-1},$ $S_{2i})$, $i\in[t]$ into three ``types'':
\begin{enumerate}
    \item \emph{Type} $1$: $\{a,a+b,a+b+1,a+2b+1\}$ for some $b> 1$, where $\{S_{2i-1},S_{2i}\}=\{\{a,a+2b+1\},\{a+b,a+b+1\}\}$;
    \item \emph{Type} $2$: $\{a,a+b,a+b+c,a+2b+c\}$ for some $b,c>1$, and $a\ge 1$, where $\{S_{2i-1},S_{2i}\}=\{\{a,a+2b+c\},\{a+b,a+b+c\}\}$;
    \item \emph{Type} $3$: $\{a,a+1,a+1+b,a+2+b\}$ for some $a,b\ge 1$, where $\{S_{2i-1},S_{2i}\}=\{\{a,a+2+b\},\{a+1,a+1+b\}\}$.
\end{enumerate}
To show that the four elements $\{\ell_1,\ell_2,\ell_3,\ell_4\}$ $\in$ $S_{2i-1}\cup S_{2i}$, $i\in[t]$, are of one of the three types listed above, we order the four elements as  $\ell_1<\ell_2<\ell_3<\ell_4$, for all $i\in[t]$. Note that by the balancing  property of $(S_{2i-1},S_{2i})$, we have $\ell_2-\ell_1=\ell_4-\ell_3$. Only one of the following can hold: (1) $\ell_2-\ell_1=1$, which belongs to Type $3$; (2) $\ell_2-\ell_1>1$ and $\ell_3-\ell_2=1$, which belongs to Type $1$; (3) $\ell_2-\ell_1>1$ and $\ell_3-\ell_2>1$, which belongs to Type $2$. 


We characterize in the next proposition some properties of companion sets of Type $1$ and Type $2$. 
\begin{lemma}\label{lem:type}
For any $v_i\in V_{\text{swp}}$, 
Equation~\eqref{eq:ddoutgreaterthan3} holds. 
\end{lemma}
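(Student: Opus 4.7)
The plan is to split into the three types of companion pair defined just above the lemma and, for each, to enumerate the magnitude-one swaps $(j,j+1)$ whose application alters $\Delta_i := \Sigma(S'_{2i-1}) - \Sigma(S'_{2i})$. I partition these effective swaps into $\mathcal{S}^+_i$ (those that increase $\Delta_i$) and $\mathcal{S}^-_i$ (those that decrease it); swaps whose two endpoints lie in a single constituent set $S_{2i-1}$ or $S_{2i}$ (possible only as the middle swap $(a+b,a+b+1)$ in Type~1 and as $(a+1,a+2)$ for Type~3 with $b=1$) are dropped, since by Lemma~\ref{lem:optimalswap} they cannot belong to $I^*$ and they fail conditions~(1)--(6) of the arc definition, so they contribute to neither $d^{\text{swp}}(v_i)$ nor $d^{\text{pot}}_{\text{out}}(v_i)$.

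For Type~1 one finds $|\mathcal{S}^+_i|=|\mathcal{S}^-_i|=3$ (all external), for Type~2 one finds $|\mathcal{S}^\pm_i|=4$ (all external), and for Type~3 with $b>1$ one finds three swaps on each side, exactly one of which on each side is a self-loop shifting $\Delta_i$ by $\pm 2$. Writing $\alpha=|\mathcal{S}^-_i\cap I^*|$, $\beta=|\mathcal{S}^+_i\cap I^*|$ (with self-loops counted by whether they lie in $I^*$), the balance of the original pair gives $\Delta_i=\beta-\alpha$ after $I^*$, and since each out-arc from $v_i$ corresponds to a swap not in $I^*$ that strictly moves $|\Delta_i|$ upwards (or, if $\Delta_i=0$, moves $\Delta_i$ in the positive direction per conditions~(5)--(6)), one checks that $d^{\text{pot}}_{\text{out}}(v_i)=|\mathcal{S}^+_i\setminus I^*|$ when $\Delta_i\ge 0$ and $|\mathcal{S}^-_i\setminus I^*|$ otherwise; self-loops contribute once to $d^{\text{pot}}_{\text{out}}$ but twice to $d^{\text{swp}}$ by definition. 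A short algebraic substitution in each of the three sign cases of $\Delta_i$ then yields $d^{\text{swp}}(v_i)+d^{\text{pot}}_{\text{out}}(v_i)\ge |\mathcal{S}^+_i|+\min(\alpha,\beta)\ge 3$ for Types~1, 2 and for Type~3 with $b>1$.

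The remaining subcase, Type~3 with $b=1$, is the obstacle: only two effective swaps remain on each side, and the naive count can collapse to $2$ in corner configurations such as $\alpha=\sigma_1=\sigma_2=0$, $\beta=1$ (so $\Delta_i=1$), where only $(a+3,a+4)$ belongs to $I^*$ among the swaps touching $v_i$. To rule this out I would invoke the maximality of $I^*$ from Lemma~\ref{lem:optimalswap}: in every such configuration the two elements of the self-loop swap that pushes $\Delta_i$ further from zero (namely $(a,a+1)$ when $\Delta_i\ge 0$, or $(a+2,a+3)$ when $\Delta_i\le 0$) are untouched by any swap in $I^*$, since those very absences are what the failing count requires. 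Hence this self-loop can be added to $I^*$ without creating a repeated index under the magnitude-one non-overlap constraint, and its inclusion increases $|\Delta_i|$ by exactly $2$, contributing $+2$ to the total discrepancy and contradicting the optimality of $I^*$. This forces $d^{\text{swp}}(v_i)+d^{\text{pot}}_{\text{out}}(v_i)\ge 3$ in the $b=1$ case as well.

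The main obstacle is the delicate bookkeeping in Type~3: self-loop edges in $G_{\text{swp}}$ and self-loop arcs in $G_{\text{pot}}$ have different weights in $d^{\text{swp}}$ and $d^{\text{pot}}_{\text{out}}$, and the six sign conditions in the arc definition must be checked case-by-case against the post-$I^*$ sign of $\Delta_i$. Once that accounting is in place, the optimality argument for augmenting $I^*$ with an unblocked self-loop closes the only remaining gap and gives the claimed bound uniformly over the three types.
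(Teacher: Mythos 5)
Your proof is correct and follows essentially the same route as the paper: the same three-type classification, grouping the effective magnitude-one swaps by the sign of their effect on $\Sigma(S'_{2i-1})-\Sigma(S'_{2i})$, invoking Lemma~\ref{lem:optimalswap} to rule out mixed directions, and counting edges plus out-arcs on the relevant side. The only organizational difference is in Type~3: the paper argues for all $b$ that one of the two self-loop swaps must lie in $I^*$, while you get $b>1$ by direct counting and reserve the add-the-unblocked-self-loop optimality argument for $b=1$ --- which is precisely the argument the paper compresses into ``either $(a,a+1)\in I^*$ or $(a+1+b,a+2+b)\in I^*$.''
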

\begin{IEEEproof}
Note that swaps involving elements from Type $1$ companion sets can be grouped into two sets, $\{(a-1,a),(a+b+1,a+b+2),(a+2b,a+2b+1)\}$ and $\{(a,a+1),(a+b-1,a+b),(a+2b+1,a+2b+2)\}$, where swaps in the same group can simultaneously contribute $+1$ to the discrepancy of the companion sets $(S_{2i-1},S_{2i})$, while swaps from different groups cannot simultaneously contribute $+1$ to the set  discrepancy. Hence, $I^*$ cannot include swaps from both groups. When
\begin{align*}
&d^{\text{swp}}(v_i)=|I^*\cap (\{(a-1,a),(a+b+1,a+b+2),\\&(a+2b,a+2b+1)\}\cup\{(a,a+1),(a+b-1,a+b),\\&(a+2b+1,a+2b+2)\} )|=0,    
\end{align*}
i.e., when $v_i$ is an isolated node in $G_{\text{swp}}$, 
one of the groups of swaps corresponds to three outgoing arcs from node $v_i$ in $E_{\text{pot}}$, meaning that $d^{\text{pot}}_{\text{out}}(v_i)=3$. When $d^{\text{swp}}(v_i)\ne 0$, 
without loss of generality, we may assume that 
$$
I^*\cap \{(a,a+1),(a+b-1,a+b),(a+2b+1,a+2b+2)\}= \varnothing.
$$
Then, 
\begin{align*}
    d^{\text{swp}}(v_i)=|I^*\cap \{&(a-1,a),(a+b+1,a+b+2),\\
    &(a+2b,a+2b+1)\}|.
\end{align*}
The swaps in $\{(a-1,a),(a+b+1,a+b+2),(a+2b,a+2b+1)\}\backslash I^*$ correspond to $d^{\text{pot}}_{\text{out}}(v_i)=3-d^{\text{swp}}(v_i)$ outgoing arcs from $v_i$ in $E_{\text{pot}}$. In either case, we have $d^{\text{swp}}(v_i)+d^{\text{pot}}_{\text{out}}(v_i)=3$.

Similarly, we group swaps involving elements from Type $2$ companion sets into two sets, $\{(a-1,a),(a+b,a+b+1),(a+b+c,a+b+c+1),(a+2b+c-1,a+2b+c)\}$ and $\{(a,a+1),(a+b-1,a+b),(a+b+c-1,a+b+c),(a+2b+c-1,a+2b+c)\}$. Similar arguments as the one previously described may be used when $d^{\text{swp}}(v_i)+d^{\text{pot}}_{\text{out}}(v_i)=4$. Consequently, $d^{\text{swp}}(v_i)+d^{\text{pot}}_{\text{out}}(v_i)\ge 3$. 

Finally, we group swaps involving elements from Type $3$ companion sets into two sets, $\{(a,a+1),(a+2+b,a+3+b)\}$ and $\{(a-1,a),(a+1+b,a+2+b)\}$. Note that either $(a,a+1)\in I^*$ or $(a+1+b,a+2+b)\in I^*$, since the two swaps contribute ``oppositely'' to the set discrepancy. Hence, by definition of $d^{\text{swp}}(v_i)$ we have $d^{\text{swp}}(v_i)+d^{\text{pot}}_{\text{out}}(v_i)= 3$. Hence,  ~\eqref{eq:ddoutgreaterthan3} holds.
\end{IEEEproof}
We are now ready to prove ~\eqref{eq:lower}. 
Summing \eqref{eq:ddoutgreaterthan3} over $v_i\in V_{\text{swp}}$, we have
\begin{align}\label{eq:3t}
\sum^t_{i=1}d^{\text{swp}}(v_i)+\sum^t_{i=1}d^{\text{pot}}_{\text{out}}(v_i)\ge 3t
\end{align}
On the other hand, we also have 
\begin{align*}
  \sum^t_{i=0}d^{\text{pot}}_{\text{in}}(v_i)=\sum^t_{i=0}d^{\text{pot}}_{\text{out}}(v_i),  
\end{align*}
Since $d^{\text{pot}}_{\text{in}}(v_0)\le 2$ and $d^{\text{pot}}_{\text{out}}(v_0)=0$, 
we have
\begin{align*}
    \sum^t_{i=1}d^{\text{pot}}_{\text{in}}(v_i)\ge \sum^t_{i=1}d^{\text{pot}}_{\text{out}}(v_i)-2.
\end{align*}
Together with~\eqref{eq:3t}, we arrive at
\begin{align}\label{eq:3t-2}
 \sum^t_{i=1}d^{\text{swp}}(v_i)+\sum^t_{i=1}d^{\text{pot}}_{\text{in}}(v_i)\ge 3t-2. 
\end{align}
By Lemma \ref{lem:removeandadd} and \eqref{eq:dlessthandswp}, we have 
\begin{align}\label{eq:swpgreaterthanpot}
  \sum^t_{i=1}d^{\text{swp}}(v_i)\ge\sum^t_{i=1}d^{\text{pot}}_{\text{in}}(v_i).  
\end{align}
Combining Lemma \ref{lem:optimalswap}, \eqref{eq:dswpe}, \eqref{eq:3t-2}, and \eqref{eq:swpgreaterthanpot}, we establish~\eqref{eq:lower}. 
\section{Lower bound for $p\ge 2$ 
} \label{sec:greater_pi}
To prove Theorem \ref{thm:maingeneralp}, 
we consider only swaps $(j,j+p)$ for which $j\in [4t-p]$. One can use the same arguments presented in Section~\ref{sec:lower_bound} to prove a lower bound $D^*(t,p)\ge 3pt-2p(p-1)$ (Theorem \ref{thm:main}) by defining edges and arcs in $G_{\text{swp}}(I_{t,p})$ and $G_{\text{pot}}(I_{t,p})$ based on swaps $(j,j+p)$, $j\in[4t-p]$.

In what follows, we improve this lower bound in Theorem~\ref{thm:maingeneralp} for $p=2$ and prove Theorem \ref{thm:p=2} by considering both swaps $(j,j+1)$, $j\in [4t-1],$ and $(j,j+2)$, $j\in[4t-2]$ in $\cI_{t,2}$. Note that to prove Theorem \ref{thm:p=2}, the arguments for proving Theorem \ref{thm:maingeneralp} and Theorem \ref{thm:main} do not work. This is because Lemma \ref{lem:removeandadd}, which is key to the proofs of Theorem \ref{thm:maingeneralp} and Theorem \ref{thm:main}, does not hold when both swaps $(j,j+1)$, $j\in [4t-1],$ and $(j,j+2)$, $j\in[4t-2]$ in $\cI_{t,2}$ are allowed. Therefore, we need to modify and extend Lemma~\ref{lem:removeandadd} to cases when swaps $(j,j+1)$, $j\in [4t-1],$ and $(j,j+2)$, $j\in[4t-2]$ in $\cI_{t,2}$ are considered. 

Similarly to what we did in the proof of~\eqref{eq:lower} from  Section~\ref{sec:lower_bound}, 
we fix the companion sets $(S_1,\ldots,S_{2t})\in\mathcal{S}_t$ and define an undirected \emph{weighted} graph $G^2_{\text{swp}}(I_{t,2})=(V^2_{\text{swp}}(I_{t,2}),E^2_{\text{swp}}(I_{t,2}))$ and a directed weighted graph $G^2_{\text{pot}}(I_{t,2})=(V^2_{\text{pot}}(I_{t,2}),E^2_{\text{pot}}(I_{t,2}))$ for a set of swaps $I_{t,2}\in\mathcal{I}_{t,2}$. But unlike the \emph{weighted graphs} $G^2_{\text{swp}}(I_{t,2})$ and $G^2_{\text{pot}}(I_{t,2})$, the graphs $G_{\text{swp}}(I_{t,1})$ and $G_{\text{pot}}(I_{t,1})$ from Section~\ref{sec:lower_bound} are \emph{unweighted}. Moreover, as described next, the node sets are different as well.

The graph $G^2_{\text{swp}}(I_{t,2})$ is used to describe the swap set $I_{t,2}$. 
The nodes are given by $V^2_{\text{swp}}(I_{t,2})=\{v_j\}^{4t}_{j=1}$ so that  each node $v_j\in V^2_{\text{swp}}(I_{t,2})$ corresponds to an integer $i\in[4t]$. There exists an edge $\{v_{j_1},v_{j_2}\}\in E^2_{\text{swp}}(I_{t,2}))$ between $v_{j_1}$ and $v_{j_2}$ with weight $w_{\text{swp}}(j_1,j_2)>0$ if $(j_1,j_2)\in I_{t,2}$ and $w_{\text{swp}}(j_1,j_2)=|j_1-j_2|$. 

The graph $G^2_{\text{pot}}(I_{t,2})$ describes the potential discrepancy changes induced by swaps $(j_1,j_2)\in (\cI_{t,2}\backslash I_{t,2})$. The node set is given by $V^2_{\text{pot}}(I_{t,2})=V^2_{\text{swp}}(I_{t,2})$. The arc set $E^2_{\text{pot}}(I_{t,2})$ is defined as follows: 
an arc $(v_{j},v_{j+2})$ (directed from $v_{j}$ to $v_{j+2}$, $j\in[4t-2]$) of weight $w_{\text{pot}}(j,j+2)=2-\mathbbm{1}(\{j,j+1\}\in E^2_{\text{swp}}(I_{t,2}))$, where $\mathbbm{1}(event)$ equals $1$ if the $event$ is true and $\mathbbm{1}(event)$ equals $0$ otherwise, belongs to the arc set of  $G^2_{\text{pot}}(I_{t,2})$ if $(j,j+2)\notin I_{t,2}$ and there exist two pairs of companion sets $(S_{2i_1-1},S_{2i_1})$ and $(S_{2i_2-1},S_{2i_2})$ ($i_1$ and $i_2$ are allowed to be the same) satisfying one of the following conditions: 
\begin{enumerate}
    \item $j\in S_{2i_1}$, $j+2\in (S_{2i_2-1}\cup S_{2i_2})\backslash S_{2i_1}$, and $\Sigma(S_{2i_1-1}^\prime)<\Sigma(S_{2i_1}^\prime)$, where $S^\prime_i$, $i\in[2t]$ is the set obtained from $S_i$ after performing the swaps in $I_{t,2}$.
    \item $j\in S_{2i_1-1}$, $j+2\in (S_{2i_2-1}\cup S_{2i_2})\backslash S_{2i_1-1}$, and $\Sigma(S_{2i_1-1}^\prime)>\Sigma(S_{2i_1}^\prime)$ after performing the swaps in $I_{t,2}$.
    \item $j\in S_{2i_1-1}$, $j+2\in (S_{2i_2-1}\cup S_{2i_2})\backslash S_{2i_1-1}$, and $\Sigma(S_{2i_1-1}^\prime)=\Sigma(S_{2i_1}^\prime)$ after performing the swaps in $I_{t,2}$.
\end{enumerate}
An arc $(v_{j},v_{j+1})$ from $v_{j}$ to $v_{j+1}$, $j\in[4t-2]$, of weight $w_{\text{pot}}(j,j+1)=1$ exists if 
$(j,j+1)\notin I_{t,2}$, $j,j+2\in S_{i}$ for some $i\in [2t]$, and there exist two pairs of companion sets $(S_{2i_1-1},S_{2i_1})$ and $(S_{2i_2-1},S_{2i_2})$ (where $i_1$ and $i_2$ can be the same) satisfying one of the following conditions: 
\begin{enumerate}
    \item $j\in S_{2i_1}$, $j+1\in (S_{2i_2-1}\cup S_{2i_2})\backslash S_{2i_1}$, and $\Sigma(S_{2i_1-1}^\prime)<\Sigma(S_{2i_1}^\prime)$, where $S^\prime_i$, $i\in[2t]$ is the set obtained from $S_i$ after performing the swaps in $I_{t,2}$.
    \item $j\in S_{2i_1-1}$, $j+1\in (S_{2i_2-1}\cup S_{2i_2})\backslash S_{2i_1-1}$, and $\Sigma(S_{2i_1-1}^\prime)>\Sigma(S_{2i_1}^\prime)$ after performing the swaps in $I_{t,2}$.
    \item $j\in S_{2i_1-1}$, $j+1\in (S_{2i_2-1}\cup S_{2i_2})\backslash S_{2i_1-1}$, and $\Sigma(S_{2i_1-1}^\prime)=\Sigma(S_{2i_1}^\prime)$ after performing the swaps in $I_{t,2}$.
\end{enumerate}
Similarly, an arc $(v_{j},v_{j-2})\in E^2_{\text{pot}}(I_{t,2})$, $j\in\{2,\ldots,4t\}$ of weight $w_{\text{pot}}(j,j-2)=2-\mathbbm{1}(\{j-1,j\}\in E^2_{\text{swp}}(I_{t,2}))$ exists if $(j-2,j)\notin I_{t,2}$ and there exist two pairs of companion sets $(S_{2i_1-1},S_{2i_1})$ and $(S_{2i_2-1},S_{2i_2})$ (where $i_1$ and $i_2$ are allowed to be the same) satisfying one of the following conditions: 
\begin{enumerate}
    \item $j\in S_{2i_1}$, $j-2\in (S_{2i_2-1}\cup S_{2i_2})\backslash S_{2i_1}$, and $\Sigma(S_{2i_1-1}^\prime)>\Sigma(S_{2i_1}^\prime)$ after performing the swaps in $I_{t,2}$.
    \item $j\in S_{2i_1-1}$, $j-2\in (S_{2i_2-1}\cup S_{2i_2})\backslash S_{2i_1-1}$, and $\Sigma(S_{2i_1-1}^\prime)<\Sigma(S_{2i_1}^\prime)$ after performing the swaps in $I_{t,2}$.
    \item $j\in S_{2i_1}$, $j-2\in (S_{2i_2-1}\cup S_{2i_2})\backslash S_{2i_1}$, and $\Sigma(S_{2i_1-1}^\prime)=\Sigma(S_{2i_1}^\prime)$ after performing the swaps in $I_{t,2}$.
\end{enumerate}
An arc $(v_{j},v_{j-1})\in E^2_{\text{pot}}(I_{t,2})$, $j\in\{3,\ldots,4t\}$, of weight $w_{\text{pot}}(j,j-1)=1$ is present if $(j-1,j)\notin I_{t,2}$, $j,j-2\in S_{i}$ for some $i\in [2t]$, and there exist two pairs of companion sets $(S_{2i_1-1},S_{2i_1})$ and $(S_{2i_2-1},S_{2i_2})$ (where $i_1$ and $i_2$ are allowed to be the same) satisfying one of the following conditions: 
\begin{enumerate}
    \item $j\in S_{2i_1}$, $j-1\in (S_{2i_2-1}\cup S_{2i_2})\backslash S_{2i_1}$, and $\Sigma(S_{2i_1-1}^\prime)>\Sigma(S_{2i_1}^\prime)$ after performing the swaps in $I_{t,2}$.
    \item $j\in S_{2i_1-1}$, $j-1\in (S_{2i_2-1}\cup S_{2i_2})\backslash S_{2i_1-1}$, and $\Sigma(S_{2i_1-1}^\prime)<\Sigma(S_{2i_1}^\prime)$ after performing the swaps in $I_{t,2}$.
    \item $j\in S_{2i_1}$, $j-1\in (S_{2i_2-1}\cup S_{2i_2})\backslash S_{2i_1}$, and $\Sigma(S_{2i_1-1}^\prime)=\Sigma(S_{2i_1}^\prime)$ after performing the swaps in $I_{t,2}$.
\end{enumerate} 
For convenience, we assume that $w_{\text{pot}}(j_1,j_2)=0$ and $w_{\text{swp}}(j_1,j_2)=0$ if $(j_1,j_2)\notin E^2_{\text{pot}}(I_{t,2})$ and $\{j_1,j_2\}\notin E^2_{\text{swp}}(I_{t,2})$, respectively. 
Intuitively, an arc $(v_{j_1},v_{j_2})\in E^2_{\text{pot}}(I_{t,2})$ indicates that adding the swap $(j_1,j_2)$ and removing one of the swaps $(j_1,j_1+1)$ or $(j_1-1,j_1)$ in $I_{t,2}$ associated with $j_1$ maximally increases the discrepancy $|\Sigma (S^\prime_{2i_1-1})-\Sigma (S^\prime_{2i_1})|$ of the companion sets $(S^\prime_{2i_1,-1},S^\prime_{2i_1}),$ where $j_1\in (S^\prime_{2i_1,-1}\cup S^\prime_{2i_1})$. In addition, the weight $w_{\text{pot}}(j_1,j_2)$ reflects the increase in this discrepancy. 

Next, fix the defining sets $(S_1,\ldots,S_{2t})$, and let $I^*_{t,2}$ be a swap set that has the smallest size and results in a maximal total discrepancy. Then, similar to Lemma \ref{lem:optimalswap}, we have the following result.
\begin{lemma}\label{lem:optimaldiscrepancyp2}
The total discrepancy of the defining sets $(S_1,\ldots,S_{2t})$ under swaps in $I^*_{t,2}$ is given by
\begin{align}\label{eq:optimaldiscrepancyp2}
D(S_1,\ldots,S_{2t};I^*_{t,2};t)=\sum_{j_1\in[4t]}\sum_{j_2\in[4t]}w_{\text{swp}}(j_1,j_2).  
\end{align}
\end{lemma}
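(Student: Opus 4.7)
The plan is to follow the template of the proof of Lemma~\ref{lem:optimalswap} while carrying the additional case structure introduced by magnitudes $k=|j_1-j_2|\in\{1,2\}$. The target identity is that every swap $s=(j_1,j_2)\in I^*_{t,2}$ contributes exactly $2k$ to the total discrepancy; once this is in hand,~\eqref{eq:optimaldiscrepancyp2} follows from the symmetry $w_{\text{swp}}(j_1,j_2)=w_{\text{swp}}(j_2,j_1)=|j_1-j_2|$ on $E^2_{\text{swp}}(I^*_{t,2})$ (and vanishing off the edge set), which makes the ordered double sum over $[4t]\times[4t]$ equal to twice the edge-sum $\sum_{(j_1,j_2)\in I^*_{t,2}} |j_1-j_2|$.

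First, I would decompose the effect of each swap. Because swaps in $\mathcal{I}_{t,2}$ are pairwise element-disjoint, the signed pair discrepancy $d_i:=\Sigma(S'_{2i-1})-\Sigma(S'_{2i})$ after applying $I^*_{t,2}$ splits additively as $d_i=\sum_{s\in I^*_{t,2}}\delta_i(s)$, where for each $s$ of magnitude $k$ the summand $\delta_i(s)$ equals $0$ on pairs untouched by $s$, $\pm k$ on each of the two pairs containing $j_1$ and $j_2$ when those endpoints lie in distinct pairs (case c), and $\pm 2k$ on the single affected pair when the endpoints lie in opposite sets of a common pair (case b). Swaps with both endpoints in a common set $S_\ell$ leave every $d_i$ unchanged and are excluded from $I^*_{t,2}$ by minimality of $|I^*_{t,2}|$.

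Next, I would pin down the per-swap contribution $\Delta_s:=D(I^*_{t,2})-D(I^*_{t,2}\setminus\{s\})$. The triangle inequality yields $\Delta_s\le\sum_i|\delta_i(s)|=2k$, with equality iff, on every pair touched by $s$, the sign of $\delta_i(s)$ agrees with that of $d_i$ and $|d_i|\ge|\delta_i(s)|$. Minimality of $|I^*_{t,2}|$ forces $\Delta_s>0$ (else remove $s$ to realize the same $D$ with a smaller swap set), while maximality forbids $\Delta_s<0$ (else remove $s$ to achieve a strictly larger $D$). To exclude the intermediate range $0<\Delta_s<2k$, which would arise only from a sign misalignment or a magnitude shortfall $|d_i|<|\delta_i(s)|$ on some affected pair $i$, I would run an exchange argument: such a shortfall forces the existence of a second swap $s^*\in I^*_{t,2}$ touching pair $i$ with $\delta_i(s^*)$ of sign opposite to $d_i$, and a direct evaluation of $|d_i|-|d_i-\delta_i(s^*)|$ and $|d_{i'}|-|d_{i'}-\delta_{i'}(s^*)|$ on the at most two pairs touched by $s^*$ shows that removing $s^*$ changes $D$ by a nonnegative amount, contradicting either maximality of $D(I^*_{t,2})$ (if strictly positive) or minimality of $|I^*_{t,2}|$ (if zero). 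Hence $\Delta_s=2k$ for every $s$, and summing over $I^*_{t,2}$ gives $D(S_1,\ldots,S_{2t};I^*_{t,2};t)=2\sum_{(j_1,j_2)\in I^*_{t,2}}|j_1-j_2|$, which is exactly the right-hand side of~\eqref{eq:optimaldiscrepancyp2}.

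The main obstacle is the exchange argument ruling out intermediate $\Delta_s$, a complication absent from Lemma~\ref{lem:optimalswap}: for $p=1$ the per-swap contribution lies automatically in $\{-2,0,+2\}$, so minimality alone settles the claim; but for $p=2$ the coexistence of magnitude-$1$ and magnitude-$2$ swaps allows $|d_i|$ to accumulate from heterogeneous contributions, admitting \emph{a priori} configurations like $|d_i|=1$ with $|\delta_i(s)|=2$ that spoil triangle-inequality saturation. Element-disjointness of the swaps in $\mathcal{I}_{t,2}$ is essential here, as it guarantees that removing a single $s^*$ perturbs only the pairs touched by $s^*$ and does not alter $\delta_i(s')$ for any other swap $s'\in I^*_{t,2}$, keeping the local modification argument legal and closing the proof.
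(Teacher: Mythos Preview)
Your proposal is correct and follows essentially the same approach as the paper: both argue that every swap of magnitude $k$ must contribute exactly $2k$ to the total discrepancy, and both establish this via an exchange argument in which a swap that fails to saturate forces the existence of another swap $s^*$ whose signed contribution opposes $d_i$ on some pair, so that removing $s^*$ does not decrease (and may increase) $D$, contradicting either minimality of $|I^*_{t,2}|$ or maximality of $D$. Your write-up is considerably more explicit than the paper's---in particular, you carefully separate the sign-misalignment and magnitude-shortfall cases and verify that element-disjointness of swaps keeps the removal argument local---whereas the paper simply asserts that a sub-maximal contribution implies the existence of a swap contributing $-|j^*_3-j^*_4|$ to the offending pair and then removes it.
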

\begin{proof} 
Similarly to the proof of Lemma ~\ref{lem:optimalswap}, we will show that every swap $(j_1,j_2)\in I^*_{t,2}$ contributes $w_{\text{swp}}(j_1,j_2)=|j_1-j_2|$ to the discrepancy of each of the companion sets $(S_{2i_1-1},S_{2i_1})$ and $(S_{2i_2-1},S_{2i_2})$, respectively, where $j_1\in (S_{2i_1-1}\cup S_{2i_1})$ and $j_2\in (S_{2i_2-1}\cup S_{2i_2})$ (and $i_1$ and $i_2$ are allowed to be the same). Suppose, on the contrary, that there exists a swap $(j^*_1,j^*_2)\in I^*_{t,2}$ that contributes less than $w_{\text{swp}}(j_1,j_2)$ to the discrepancy of each of the companion sets $(S_{2i^*_1-1},S_{2i^*_1}),$ where $j^*_1\in (S_{2i^*_1-1}\cup S_{2i^*_1})$ or $j^*_2\in (S_{2i^*_1-1}\cup S_{2i^*_1})$. Then, there exists a swap $(j^*_3,j^*_4)\in I^*_{t,2}$ that contributes $-w_{\text{swp}}(j*_3,j^*_4)=-|j^*_3-j^*_4|$ to the discrepancy of $(S_{2i^*_1-1},S_{2i^*_1})$. Hence, removing $(j^*_3,j^*_4)$ from $I^*_{t,2}$ does not decrease the total discrepancy since $(j^*_3,j^*_4)$ contributes a discrepancy of at most $w_{\text{swp}}(j*_3,j^*_4)$ to another pair of companion sets. This contradicts the minimality assumption on  the size of $I^*_{t,2}$.
\end{proof}
Before proceeding to the proof of the lower bound, we present a few lemmas. 
The first lemma is a generalization of Lemma ~\ref{lem:removeandadd}.
\begin{lemma}\label{lem:optlessswpbi}
For an edge and an arc set $E^2_{\text{swp}}(I^*_{t,2})$ and $E^2_{\text{pot}}(I^*_{t,2}),$ respectively, and 
for any pair of companion sets $(S_{2i-1},S_{2i})$, $i\in[t]$, we have
\begin{align}\label{eq:potlessthandp2}
    &\sum_{j_1\in (S_{2i-1}\cup S_{2i})}\sum_{j_2\in[4t]}w_{\text{opt}}(j_2,j_1)\nonumber\\
\le&    \sum_{j_1\in (S_{2i-1}\cup S_{2i})}\sum_{j_2\in[4t]}w_{\text{swp}}(j_2,j_1)+b_i
\end{align}
where $$b_i=\sum_{j_1\in (S_{2i-1}\cup S_{2i})}\big(\sum_{j_2\in[4t]}w_{\text{pot}}(j_2,j_1)-\max_{j_2\in[4t]}w_{\text{pot}}(j_2,j_1)\big).$$  
\end{lemma}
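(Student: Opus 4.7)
The plan is to generalize the contradiction argument of Lemma~\ref{lem:removeandadd} to the weighted $p=2$ setting, leveraging the constraint that each element of $[4t]$ appears in at most one swap of any $I\in\mathcal{I}_{t,2}$. First I would rewrite the target inequality in the equivalent form
\[
\sum_{j_1\in S_{2i-1}\cup S_{2i}}\max_{j_2\in[4t]}w_{\text{pot}}(j_2,j_1)\;\le\;\sum_{j_1\in S_{2i-1}\cup S_{2i}}\sum_{j_2\in[4t]}w_{\text{swp}}(j_2,j_1),
\]
which is obtained by cancelling the $\sum_{j_2}w_{\text{pot}}-\max_{j_2}w_{\text{pot}}$ terms that define $b_i$. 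The per-element maximum on the left-hand side precisely captures the ``one swap per element'' constraint, since two distinct arcs ending at the same $v_{j_1}$ correspond to two potential swaps that share the element $j_1$ and hence cannot coexist in any valid swap set.

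Assume the simplified inequality fails. For each $j_1\in S_{2i-1}\cup S_{2i}$, pick some $j_2^\ast(j_1)\in[4t]$ achieving $\max_{j_2}w_{\text{pot}}(j_2,j_1)$. I would construct an alternative swap set $\tilde I$ from $I^*_{t,2}$ in three steps: (i) remove every swap of $I^*_{t,2}$ with at least one endpoint in $S_{2i-1}\cup S_{2i}$; (ii) for every $j_1\in S_{2i-1}\cup S_{2i}$ with $\max_{j_2}w_{\text{pot}}(j_2,j_1)>0$, insert the swap $(j_2^\ast(j_1),j_1)$; and (iii) resolve any remaining overlap between inserted swaps, or between an inserted swap and a surviving swap, by keeping only the one of larger weight. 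Because step~(i) has already cleared every edge of $E^2_{\text{swp}}$ incident to $S_{2i-1}\cup S_{2i}$, the conflicts to resolve in step~(iii) involve only a handful of explicit neighbouring configurations around each $j_1$.

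The accounting then mirrors the proof of Lemma~\ref{lem:removeandadd}, in weighted form and using Lemma~\ref{lem:optimaldiscrepancyp2}. Step~(i) decreases the total discrepancy by exactly $\sum_{j_1\in S_{2i-1}\cup S_{2i}}\sum_{j_2\in[4t]}w_{\text{swp}}(j_2,j_1)$, because the double sum counts each removed swap once for every endpoint it has in the companion pair and Lemma~\ref{lem:optimaldiscrepancyp2} shows that each swap of weight $w$ contributes $2w$ to the total discrepancy, once per endpoint. Step~(ii) raises the total discrepancy by at least $\sum_{j_1\in S_{2i-1}\cup S_{2i}}\max_{j_2}w_{\text{pot}}(j_2,j_1)$: by the defining conditions of an incoming arc at $v_{j_1}$, the inserted swap increases the discrepancy of the companion pair containing $j_2^\ast(j_1)$ by its weight, and because step~(i) has emptied the companion pair $(S_{2i-1},S_{2i})$ of every conflicting swap, the $\mathbbm{1}(\cdot)$ subtraction in $w_{\text{pot}}(j,j\pm 2)=2-\mathbbm{1}(\cdot)$ is no longer active, so each inserted weight-$2$ arc is realised at its full weight. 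Under the assumed failure of the inequality, the gain strictly exceeds the loss, yielding a swap set of larger total discrepancy than $I^*_{t,2}$ and contradicting its optimality.

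The main obstacle is step~(iii): an inserted swap at some $j_1\in S_{2i-1}\cup S_{2i}$ may share its other endpoint with an inserted swap at some $j_1'\in S_{2i-1}\cup S_{2i}$, and, more subtly, the rule $w_{\text{pot}}(j,j\pm 1)=1$ couples the weights of adjacent unit and weight-$2$ arcs. I would deal with this via an exchange argument: among any two conflicting inserted swaps, the discarded one has weight no larger than the kept one, so after collapsing the conflict the effective sum of inserted weights still dominates $\sum_{j_1}\max_{j_2}w_{\text{pot}}(j_2,j_1)$ up to a harmless reallocation between neighbours. Enumerating the neighbourhood configurations around each $j_1$, in parallel with the Type~$1$/$2$/$3$ classification from Lemma~\ref{lem:type}, is where the bulk of the casework lies, but each case is small and follows the pattern already established in Section~\ref{sec:lower_bound}.
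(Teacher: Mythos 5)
Your overall route coincides with the paper's: you cancel the $b_i$ terms to reduce the claim to $\sum_{j_1\in S_{2i-1}\cup S_{2i}}\max_{j_2\in[4t]}w_{\text{pot}}(j_2,j_1)\le\sum_{j_1\in S_{2i-1}\cup S_{2i}}\sum_{j_2\in[4t]}w_{\text{swp}}(j_2,j_1)$, and you prove this by removing all swaps of $I^*_{t,2}$ incident to the pair, inserting one maximum-weight potential swap per element, and contradicting the optimality of $I^*_{t,2}$ (the paper avoids your step (iii) by observing that every node has at most one outgoing arc in $G^2_{\text{pot}}$, which is what makes the replacement a valid swap set). However, your bookkeeping in steps (i)--(ii) has a genuine gap. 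By Lemma~\ref{lem:optimaldiscrepancyp2}, a swap of weight $w$ in $I^*_{t,2}$ contributes $w$ at \emph{each} of the two companion pairs containing its endpoints, i.e.\ $2w$ to the total discrepancy. Hence step (i) lowers the total discrepancy by \emph{twice} the total weight of the removed swaps, which strictly exceeds $\sum_{j_1}\sum_{j_2}w_{\text{swp}}(j_2,j_1)$ whenever a removed swap has its other endpoint outside $S_{2i-1}\cup S_{2i}$: such a swap is counted once in the double sum but its removal costs $2w$. So your claim that the loss equals the double sum exactly is false in general. Since you credit each inserted swap only with its gain at the tail's companion pair, you obtain gain at least the left-hand side but loss possibly as large as twice the right-hand side, and the assumed failure of the inequality no longer forces the gain to exceed the loss; the contradiction does not close.

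The repair is exactly what the paper's appeal to Lemma~\ref{lem:removeandadd} implicitly supplies: count both endpoints on both sides. Each inserted swap also changes the discrepancy of the cleared pair $(S_{2i-1},S_{2i})$ by its weight, and one must argue, as in Lemma~\ref{lem:removeandadd}, that no two inserted swaps can push that pair's discrepancy in opposite directions (otherwise the beneficial one would already belong to $I^*_{t,2}$, since after clearing it conflicts with nothing). This gives a gain of twice the left-hand side against a loss of at most twice the right-hand side, and the assumed violation then does yield a strictly better swap set, contradicting the choice of $I^*_{t,2}$. A further small slip: the indicator in $w_{\text{pot}}(j,j\pm 2)=2-\mathbbm{1}(\cdot)$ refers to a swap of $I^*_{t,2}$ at the \emph{tail} $j=j_2$, which need not be incident to the pair you cleared in step (i), so it is not ``deactivated'' by the clearing; you do not need it to be, because $w_{\text{pot}}$ is by construction already the net gain at the tail's pair.
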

\begin{proof}
The proof follows along the same lines as that of Lemma ~\ref{lem:removeandadd}. We have that 
\begin{align}\label{eq:potlessswp1}
&\sum_{j_1\in (S_{2i-1}\cup S_{2i})}\max_{j_2\in[4t]}w^2_{\text{pot}}(j_2,j_1)\nonumber\\
\le& \sum_{j_1\in (S_{2i-1}\cup S_{2i})}\sum_{j_2\in[4t]}w^2_{\text{swp}}(j_2,j_1), \end{align}
which holds because one can otherwise remove the swaps associated with nodes $\{v_j:j\in(S_{2i-1}\cup S_{2i})\}$ and add one swap $\{j_2,j_1\}$ for each $j_1\in (S_{2i-1}\cup S_{2i})$ such that $w^2_{\text{pot}}(j_2,j_1)=\max_{j\in [4t]}w^2_{\text{pot}}(j,j_1)$, which leads to a higher set discrepancy. Note that each node $v_j$, $j\in[4t]$, has at most one outgoing arc and thus, such a swap replacement is valid. Then, ~\eqref{eq:potlessthandp2} follows from \eqref{eq:potlessswp1}.
\end{proof}
Next, we provide bounds on $\sum_{i\in[t]}b_i$. We first show that for each node $v_j$, there is at most one ingoing arc of the form $(v_{j-2},v_j)$ or $(v_{j+2},v_j)$ to $v_j$. Otherwise, if $(v_{j-2},v_j),(v_{j},v_{j+2})\in E^2_{\text{pot}}(I^*_{t,2})$, then we have either $\{v_{j-2},v_j\}\in E^2_{\text{swp}}(I^*_{t,2})$ or $\{v_{j+2},v_j\}\in E^2_{\text{swp}}(I^*_{t,2})$ since exactly one swap of the form $(j-2,j)$ and $(j+2,j)$ increases the set discrepancy of $(S_{2i-1},S_{2i})$. On the other hand, either $\{v_{j-2},v_j\}\in E^2_{\text{swp}}(I^*_{t,2})$ or $\{v_{j+2},v_j\}\in E^2_{\text{swp}}(I^*_{t,2})$ contradicts the assumption $(v_{j-2},v_j),(v_{j},v_{j+2})\in E^2_{\text{pot}}(I^*_{t,2})$. Hence, at most one of the arcs $(v_{j-2},v_j)$ and $(v_{j+2},v_j)$ is directed towards $v_j$. Moreover, note that at most one of $(v_{j-1},v_j)$ and $(v_{j+1},v_j)$ exists (because either arc implies $\{j-1,j+1\}=S_{i}$ for some $i\in[2t],$ and cannot simultaneously satisfy both $(v_{j-1},v_j)\in E^2_{\text{pot}}(I^*_{t,2})$ and $(v_{j+1},v_j)\in E^2_{\text{pot}}(I^*_{t,2})$). 
This implies that any node $v_{j_1}$, $j_1\in(S_{2i-1}\cup S_{2i})$ contributes $1$ to the value of $b_i$ only when one of $(v_{j_1-2},v_{j_1})$ and $(v_{j_1+2},v_{j_1})$ exists in $E^2_{\text{pot}}(I_{t,2})$ and one of $(v_{j_1-1},v_{j_1})$ and $(v_{j_1+1},v_{j_1})$ exists in $E^2_{\text{pot}}(I_{t,2})$, in which case, we have $\{j_1-1,j_1+1\}=S_i$ for some $i\in[2t]$.

Consider the following sets of companion sets:
\begin{align}\label{eq:defa1}
    A_1=&\{(S_{2i-1},S_{2i}):j,j+2\in S_{2i-1} \textup{ or }j,j+2\in S_{2i}, \nonumber\\
    &\textup{ for some integer }j\in[4t-2]\};\nonumber\\
    A_2=&\{(S_{2i-1},S_{2i}):j,j+2\in S_{2i-1} \textup{ or }j,j+2\in S_{2i}, \nonumber\\
    &\textup{ for some integer }j\in[4t-2],\textup{ and}\nonumber\\
    &(j,j+1),(j-1,j+1)\in E^2_{\text{pot}}(I_{t,2}) \textup{ or}\nonumber\\
    &(j+2,j+1),(j+3,j+1)\in E^2_{\text{pot}}(I_{t,2})\};\nonumber\\
    A_3=&\{(S_{2i-1},S_{2i}):j,j+2\in S_{2i-1} \textup{ or }j,j+2\in S_{2i}, \nonumber\\
    &\textup{ for some integer }j\in[4t-2],\textup{ and}\nonumber\\
    &(j,j+1),(j+3,j+1)\in E^2_{\text{pot}}(I_{t,2}) \textup{ or}\nonumber\\
    &(j+2,j+1),(j-1,j+1)\in E^2_{\text{pot}}(I_{t,2})\}.   
\end{align}
Then, each node $v_j$, $j\in(S_{2i-1}\cup S_{2i})$ that contributes $1$ to the value of $b_i$ corresponds to a unique companion set $(S_{2i-1},S_{2i})\in A_2$ or $(S_{2i-1},S_{2i})\in A_3$. Hence, we have
\begin{align}\label{eq:sumofbi}
\sum_{i\in[t]}b_i=|A_2|+|A_3|
\end{align}
The following proposition improves the bound ~\eqref{eq:potlessthandp2} for arbitrary companion sets $(S_{2i-1},S_{2i})\in A_2$.
\begin{proposition}\label{prop:potlessswp} 
For any set $(S_{2i-1},S_{2i})\in A_2$, we have 
\begin{align}\label{eq:potlessswp}
    &\sum_{j_1\in (S_{2i-1}\cup S_{2i})}\sum_{j_2\in[4t]}w_{\text{pot}}(j_1,j_2)\nonumber\\
     <&\sum_{j_1\in (S_{2i-1}\cup S_{2i})}\sum_{j_2\in[4t]}w_{\text{swp}}(j_1,j_2)+b_i.
\end{align}    
\end{proposition}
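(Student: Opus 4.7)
My plan is to strengthen the bound of Lemma~\ref{lem:optlessswpbi} to a strict inequality for companion sets in $A_2$ by augmenting the ``remove-and-add'' argument with one additional swap contributed by the $A_2$ arc structure, then appealing to the minimality of $I^*_{t,2}$.

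I would first fix a companion set $(S_{2i-1}, S_{2i}) \in A_2$ and, without loss of generality, assume there exist integers $j, j+2 \in S_{2i-1}$ with both arcs $(v_j, v_{j+1}), (v_{j-1}, v_{j+1}) \in E^2_{\text{pot}}(I^*_{t,2})$. The other sub-cases in the definition of $A_2$ (with $j, j+2 \in S_{2i}$, or with arcs $(v_{j+2}, v_{j+1})$ and $(v_{j+3}, v_{j+1})$) are handled by mirror arguments using the symmetry of the definitions of $G^2_{\text{swp}}$ and $G^2_{\text{pot}}$ under reversal of the ground set.

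Next, I would replay the remove-and-add operation from the proof of Lemma~\ref{lem:optlessswpbi}: delete every swap in $I^*_{t,2}$ that is incident to an element of $S_{2i-1} \cup S_{2i}$, and for each such element $j_1$ insert a single swap realizing $\max_{j_2\in[4t]} w_{\text{pot}}(j_2, j_1)$. Lemma~\ref{lem:optlessswpbi} guarantees that the total discrepancy does not decrease, and the desired strict inequality of the proposition is equivalent to showing that in the $A_2$ configuration this modification can be upgraded to a \emph{strict} increase in total discrepancy.

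To produce the extra gain I would use the two arcs prescribed by $A_2$. Since $(v_j, v_{j+1}), (v_{j-1}, v_{j+1}) \in E^2_{\text{pot}}(I^*_{t,2})$, neither swap $(j, j+1)$ nor $(j-1, j+1)$ lies in $I^*_{t,2}$. Because $j-1, j+1 \notin S_{2i-1} \cup S_{2i}$, and the replacements at $v_j$ and $v_{j+2}$ touch at most two outside elements each, at least one of these two candidate swaps stays compatible with the replaced swap set. Inserting such a compatible swap adds $w_{\text{pot}}(j, j+1) = 1$ or $w_{\text{pot}}(j-1, j+1) = 2 - \mathbbm{1}(\{j-1, j\} \in E^2_{\text{swp}}(I^*_{t,2}))$ to the total discrepancy; combined with the minimality of $|I^*_{t,2}|$ and Lemma~\ref{lem:optimaldiscrepancyp2}, this forces the inequality of Lemma~\ref{lem:optlessswpbi} to be strict for $(S_{2i-1}, S_{2i})$, which is exactly the proposition.

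The hardest step is the compatibility claim. One must exclude, by case analysis on the membership of the neighboring swaps $(j-1, j), (j, j+1), (j+1, j+2)$ in $I^*_{t,2}$ and on the choice of maximum-weight ingoing arcs at $v_j$ and $v_{j+2}$, the scenario in which both candidate extra swaps share an element with a replacement swap. The structural observation made earlier in the section — that $v_{j+1}$ carries at most one magnitude-$1$ and one magnitude-$2$ ingoing arc — limits the possible replacement patterns at $v_j$ and $v_{j+2}$ and, together with the weight formula $w_{\text{pot}}(j-1, j+1) = 2 - \mathbbm{1}(\{j-1, j\} \in E^2_{\text{swp}}(I^*_{t,2}))$, should always leave one of the two candidate swaps available.
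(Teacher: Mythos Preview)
Your overall framework --- remove the swaps incident to $S_{2i-1}\cup S_{2i}$, insert the maximum-weight replacements from $G^2_{\text{pot}}$, and then squeeze in one additional swap to force the inequality of Lemma~\ref{lem:optlessswpbi} to become strict --- is exactly the paper's. The divergence is in \emph{where} the extra swap is placed. You aim to place it at the external element $j+1$, choosing one of the swaps $\{(j,j+1),(j-1,j+1)\}$ that underlie the two $A_2$ arcs, and then face the compatibility case analysis you yourself flag as the hardest step. The paper instead places the extra swap at the \emph{internal} element $j\in S_{2i-1}$: it observes that no edge of $E^2_{\text{swp}}(I^*_{t,2})$ is incident to $j$ and that no replacement swap is added at $j$ (the parenthetical ``there is no added swap including $j$''), so after the remove-and-add step $j$ is guaranteed free and one simply adjoins the swap $(j-1,j)$. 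This observation --- that the $A_2$ structure forces the internal element $j$ to be idle --- is what bypasses your case analysis entirely.

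As written, your compatibility claim is left unproved, and some of the supporting structural assertions are not automatic (for instance $j-1\notin S_{2i-1}\cup S_{2i}$ fails when $S_{2i}=\{j-1,j+3\}$; and the element $j+1$ may still be occupied by a swap $(j+1,j+3)\in I^*_{t,2}$ that is neither removed nor ruled out by the $A_2$ arcs). So the proposal has a genuine gap precisely at the step you identified as hardest; the paper's alternative placement of the extra swap at $j$ is the simplification that closes it.
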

\begin{proof}
The proof follows similar to that of Lemma \ref{lem:optlessswpbi}. Without loss of generality, assume that $j,j+2\in S_{2i-1}$ and $(j,j+1),(j-1,j+1)\in E^2_{\text{pot}}(I^*_{t,2})$. Then, there is no edge incident to $j$ in $E^2_{\text{swp}}(I^*_{t,2})$. If \eqref{eq:potlessswp} does not hold, one can remove the swaps associated with nodes $\{v_j:j\in(S_{2i-1}\cup S_{2i})\}$ and add one swap $\{j_2,j_1\}$ for each $j_1\in (S_{2i-1}\cup S_{2i})$ such that $w^2_{\text{pot}}(j_2,j_1)=\max_{j\in [4t]}w^2_{\text{pot}}(j,j_1)$ (note that there is no added swap including $j$). In addition, one can add the swap $\{j-1,j\}$. The added swaps result in higher discrepancy, which contradicts the definition of $I^*_{t,2}$. Hence,~\eqref{eq:potlessswp} holds.    
\end{proof}
The following is an extension of Lemma \ref{lem:type} to the case $p=2$.
\begin{lemma}
For the graphs $G^2_{\text{pot}}(I^*_{t,2})$ and $G^2_{\text{swp}}$, we have
\begin{align}\label{eq:sumpotswp}
    &\sum_{j_1\in [4t]}\sum_{j_2\in[4t]}w_{\text{pot}}(j_2,j_1)\nonumber\\
&+    \sum_{j_1\in [4t]}\sum_{j_2\in[4t]}w_{\text{swp}}(j_2,j_1)\nonumber\\
\ge&8t-|A_1|-8,
\end{align}
where $A_1$ is defined in~\eqref{eq:defa1}.
\end{lemma}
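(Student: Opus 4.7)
The plan is to reinterpret the left-hand side of~\eqref{eq:sumpotswp} as a sum of per-vertex local contributions over $[4t]$ and then bound each contribution by a case analysis. For each $j\in[4t]$, define
$$C(j) = \sum_{j'\in[4t]}\bigl(w_{\text{pot}}(j,j') + w_{\text{swp}}(j,j')\bigr),$$
i.e., the total weight of arcs out of $v_j$ plus the total weight of edges incident to $v_j$. Since every arc in $E^2_{\text{pot}}(I^*_{t,2})$ has a unique source and every undirected edge in $E^2_{\text{swp}}(I^*_{t,2})$ contributes to two endpoints, swapping the order of summation yields
$$\sum_{j_1,j_2\in[4t]}\bigl(w_{\text{pot}}(j_2,j_1)+w_{\text{swp}}(j_2,j_1)\bigr) = \sum_{j=1}^{4t}C(j),$$
reducing the desired inequality to lower bounding $\sum_j C(j)$.

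The central step is the local claim $C(j)\ge 2$ for every non-boundary index $j \in \{3,\ldots,4t-2\}$ whose companion set does not place $j$ inside a distance-$2$ sub-set pair. I plan to use the fact that each such $j$ has a uniquely determined ``good'' direction (up or down), namely the direction in which a swap involving $j$ strictly increases the discrepancy of its companion pair $(S_{2i-1},S_{2i})$; this is read off directly from the arc conditions and the sign of $\Sigma(S_{2i-1}^\prime)-\Sigma(S_{2i}^\prime)$. A case split on whether (WLOG in the up-direction) $(j,j+1)$, $(j,j+2)$, or neither lies in $I^*_{t,2}$ (the two cannot hold simultaneously because swaps in $\cI_{t,2}$ share no element) gives $C(j)=2$ in every case. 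Crucially, the weight identity $w_{\text{pot}}(j,j+2)=2-\mathbbm{1}(\{j,j+1\}\in E^2_{\text{swp}})$ is designed exactly so that the overlap between a magnitude-one edge and a magnitude-two arc sharing endpoint $j$ is absorbed, yielding the tight value $2$ in each subcase.

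For companion sets lying in $A_1$, say with $j_0,j_0+2\in S_{2i-1}$, the swap $(j_0,j_0+2)$ is internal to $S_{2i-1}$ and therefore cannot change $\Sigma(S_{2i-1}^\prime)$, so it is neither an edge (by the minimality argument underlying Lemma~\ref{lem:optimaldiscrepancyp2}) nor the source of a magnitude-two arc in the good direction. I expect to show, via a direction-dependent argument, that exactly one of $v_{j_0},v_{j_0+2}$ loses exactly one unit from the generic baseline of $2$, while the other three elements of the companion set still contribute $\ge 2$ each; the surviving magnitude-one arc between the two distance-$2$ elements appears precisely because the $A_1$ condition ``$j_0,j_0+2\in S_i$'' is the trigger for magnitude-one arcs in $G^2_{\text{pot}}$. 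Each $A_1$ companion set thus contributes $\ge 7$ rather than $\ge 8$, producing the $-|A_1|$ term.

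Finally, the four boundary indices $\{1,2,4t-1,4t\}$ may lack one or both swaps in their down (for $j\in\{1,2\}$) or up (for $j\in\{4t-1,4t\}$) direction, and when the good direction coincides with the missing one the corresponding $C(j)$ loses up to $2$ from the generic lower bound; the total boundary deficit is therefore at most $2\cdot 4 = 8$. Combining all three deductions yields $\sum_{j=1}^{4t}C(j)\ge 2\cdot 4t - |A_1| - 8 = 8t - |A_1| - 8$, which is the claimed inequality. The hardest part will be Step~2---cleanly threading the weight adjustment through the three subcases so that the tight value $2$ is obtained uniformly---together with checking in Step~3 that in an $A_1$ set only one, not both, of the distance-$2$ elements suffers the contribution loss, so that $|A_1|$ (rather than $2|A_1|$) is the correct correction.
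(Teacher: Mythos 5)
Your proposal is correct and follows essentially the same route as the paper: both decompose the double sum into per-vertex contributions $C(j)$ (out-weight in $G^2_{\text{pot}}$ plus incident edge weight in $G^2_{\text{swp}}$), establish the generic value $2$ for non-boundary vertices outside $A_1$ companion sets via the same good-direction case split exploiting $w_{\text{pot}}(j,j+2)=2-\mathbbm{1}(\{j,j+1\}\in E^2_{\text{swp}})$, and charge a loss of at most $1$ per $A_1$ companion set and at most $8$ to the four boundary indices. The only cosmetic difference is that you argue $C(j)\ge 2$ where the paper asserts equality, which is immaterial for the lower bound.
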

\begin{proof}
Note that for any node $v_j$ such that $j\in (S_{2i-1}\cup S_{2i})$ for some $(S_{2i-1},S_{2i})\notin A_1$ and $j\notin \{1,2,4t-1,4t\}$, we have
\begin{align}\label{eq:singlenode}
  \sum_{j'\in[4t]} w^2_{\text{pot}}(j,j')+\sum_{j'\in[4t]} w^2_{\text{swp}}(j,j') =2.
\end{align}
Moreover, for any $(S_{2i-1},S_{2i})\in A_1$, there is at most one $j\in (S_{2i-1}\cup S_{2i})$ such that
\begin{align}\label{eq:singlenode1}
  \sum_{j'\in[4t]} w^2_{\text{pot}}(j,j')+\sum_{j'\in[4t]} w^2_{\text{swp}}(j,j') =1,
\end{align}
and for the remaining $j'\in (S_{2i-1}\cup S_{2i})\backslash\{j,1,2,4t-1,4t\}$, we have~\eqref{eq:singlenode}. Combining \eqref{eq:singlenode} and \eqref{eq:singlenode1}, we obtain
\begin{align*}
    &\sum_{j_1\in [4t]}\sum_{j_2\in[4t]}w_{\text{pot}}(j_1,j_2)\nonumber\\
&+    \sum_{j_1\in [4t]}\sum_{j_2\in[4t]}w_{\text{swp}}(j_1,j_2)\nonumber\\
\ge&8t-|A_1|-8.   
\end{align*}
On the other hand, we have 
\begin{align*}
    &\sum_{j_1\in (S_{2i-1}\cup S_{2i})}\sum_{j_2\in[4t]}w_{\text{pot}}(j_1,j_2)\nonumber\\
=&    \sum_{j_1\in (S_{2i-1}\cup S_{2i})}\sum_{j_2\in[4t]}w_{\text{pot}}(j_2,j_1).
\end{align*}
Therefore,~\eqref{eq:sumpotswp} holds.
\end{proof}
According to Lemma~\ref{lem:optlessswpbi} and Proposition~\ref{prop:potlessswp}, we have
\begin{align}\label{eq:optlessswpa2}
    &\sum_{j_1\in [4t]}\sum_{j_2\in[4t]}w_{\text{opt}}(j_2,j_1)\nonumber\\
\le&    \sum_{j_1\in [4t]}\sum_{j_2\in[4t]}w_{\text{swp}}(j_2,j_1)+\sum_{i\in[t]}b_i-|A_2|.   
\end{align}
Combining \eqref{eq:sumofbi},  \eqref{eq:sumpotswp}, and \eqref{eq:optlessswpa2}, we arrive at
\begin{align}\label{eq:swplessthan}
    &\sum_{j_1\in [4t]}\sum_{j_2\in[4t]}w_{\text{swp}}(j_2,j_1)
\ge&\frac{8t-|A_1|-|A_3|-8}{2}.
\end{align}
The following proposition provides another lower bound for $\sum_{j_1\in [4t]}\sum_{j_2\in[4t]}w_{\text{swp}}(j_2,j_1)$.
\begin{proposition}
In the graph $G^2_{\text{swp}}(I^*_{t,2})$, we have
 \begin{align}\label{eq:swpmorethan}
     &\sum_{j_1\in [4t]}\sum_{j_2\in[4t]}w_{\text{swp}}(j_2,j_1)\nonumber\\
\ge&5|A_3|.  
\end{align}   
\end{proposition}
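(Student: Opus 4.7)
The plan is to prove $\sum_{j_1,j_2}w_{\text{swp}}(j_2,j_1)\ge 5|A_3|$ via a local charging argument: for each companion pair in $A_3$, I will allocate $5$ units of swap weight drawn from edges of $G^2_{\text{swp}}(I^*_{t,2})$ incident to the local cluster of nodes near that pair, and then verify the charges are consistent (each swap is charged at most once, or is charged fractionally in a compatible way). This parallels the replacement-based reasoning from Lemmas~\ref{lem:removeandadd} and~\ref{lem:optlessswpbi}: each $A_3$ pair carries a structural obstruction that, by optimality of $I^*_{t,2}$, must be absorbed by sufficiently heavy swaps already present in $I^*_{t,2}$.

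Concretely, I would fix $(S_{2i-1},S_{2i})\in A_3$ and by symmetry reduce to the case $\{j,j+2\}\subset S_{2i-1}$ with case (a) arcs $(v_j,v_{j+1}),(v_{j+3},v_{j+1})\in E^2_{\text{pot}}(I^*_{t,2})$. These arcs witness that the swaps $(j,j+1)$ and $(j+1,j+3)$ are not in $I^*_{t,2}$, yet each would increase the discrepancy of some pair (the $A_3$ pair itself for the first, and the pair containing $j+3$ for the second). By the minimality-plus-optimality argument, if the swaps of $I^*_{t,2}$ incident to the cluster $\{v_{j-1},v_j,v_{j+1},v_{j+2},v_{j+3},v_{j+4}\}$ had total weight strictly less than $5$, one could delete them and insert the arc-swaps (together with any nearby compatible swaps) to produce a strictly greater total discrepancy, contradicting the optimality of $I^*_{t,2}$. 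I would then run a structured case analysis on which swaps in $I^*_{t,2}$ involve $\{j,j+1,j+2,j+3\}$ -- candidates coming from $(j\!-\!2,j),(j\!-\!1,j),(j\!-\!1,j\!+\!1),(j\!+\!1,j\!+\!2),(j\!+\!2,j\!+\!3),(j\!+\!2,j\!+\!4),(j\!+\!3,j\!+\!4),(j\!+\!3,j\!+\!5)$ -- and in each admissible configuration verify that $\sum w_{\text{swp}}\ge 5$. The value $5$ arises because the two arcs at $v_{j+1}$ have combined potential weight $1+(2-\mathbbm{1}(\{j\!+\!2,j\!+\!3\}\in E^2_{\text{swp}}))$, and blocking them while also covering the structural cost of $\{j,j+2\}$ sharing one half of the companion pair requires swap weight at least $5$.

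Finally, I would address global consistency. Since companion halves of distinct $A_3$ pairs are pairwise disjoint, the integer $j$ determined by $\{j,j+2\}\subset S_{2i-1}$ (or $\subset S_{2i}$) uniquely identifies each $A_3$ pair, so the ``centers'' of the local clusters are separated by at least $2$. When neighboring $A_3$ clusters share a swap, I would either assign the swap entirely to the $A_3$ pair whose charging requires it more, or use a fractional allocation that sums to the full swap weight, ensuring that the global sum $\sum_{j_1,j_2}w_{\text{swp}}(j_2,j_1)$ still dominates $5|A_3|$.

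The main obstacle will be the case analysis verifying the $5$-unit lower bound across every configuration of swaps in $I^*_{t,2}$ around $\{j,j+1,j+2,j+3\}$, where one must argue that no lighter configuration can simultaneously block both potential swaps while remaining consistent with all the discrepancy-sign conditions imposed by arc definitions. A secondary difficulty is the consistency of the charging across overlapping clusters from neighboring $A_3$ pairs, particularly when a single heavy swap could, a priori, serve to absorb obstructions from multiple $A_3$ pairs simultaneously.
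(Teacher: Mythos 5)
There is a genuine gap, and it lies in the target of your charging scheme. The sum in \eqref{eq:swpmorethan} runs over ordered pairs $(j_1,j_2)$, so each undirected edge of $G^2_{\text{swp}}(I^*_{t,2})$ is counted twice; the statement therefore only requires total undirected swap weight at least $\tfrac{5}{2}|A_3|$. Your plan instead demands $5$ units of genuine swap weight per $A_3$ pair, with each edge charged (at most) its full weight once across all pairs, which is strictly stronger than the proposition and is not forced by optimality. Concretely, for a pair with $\{j,j+2\}\subseteq S_{2i-1}$ and arcs $(v_j,v_{j+1}),(v_{j+3},v_{j+1})\in E^2_{\text{pot}}(I^*_{t,2})$, what the maximality of $I^*_{t,2}$ forces near your cluster is only the weight-$2$ edge $\{j-1,j+1\}$ (the element $j+1$ must be blocked, and the swaps $(j,j+1)$, $(j+1,j+3)$, $(j+1,j+2)$ are excluded) together with one edge at $j+2$, namely $\{j+2,j+3\}$ of weight $1$ or $\{j+2,j+4\}$ of weight $2$. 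That is local weight $3$, not $5$, and your exchange step cannot manufacture a contradiction: the two blocked arcs at $v_{j+1}$ carry combined potential weight at most $1+2=3$, so deleting cluster swaps of total weight $4$, say, and inserting the arc-swaps does not increase the total discrepancy. The deferred case analysis would therefore fail to certify the $5$-unit local bound, and no fractional reallocation across overlapping clusters can repair a per-pair demand of $5$ raw units when only $2+1$ is guaranteed to exist per pair.

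The paper reaches $5|A_3|$ precisely by combining the factor-$2$ double count with a sharing argument: the weight-$2$ edge $\{j-1,j+1\}$ (or its mirror $\{j+1,j+3\}$ in the other $A_3$ configuration) is uniquely owned by its $A_3$ pair, since the pair is determined by its middle element $j+1$, contributing $2$; the forced edge at $j+2$ can be claimed by at most two $A_3$ pairs, contributing at least $\tfrac{1}{2}$ per pair. This yields undirected weight at least $\tfrac{5}{2}$ per pair, hence at least $5|A_3|$ for the ordered double sum. If you recalibrate your charging target to $\tfrac{5}{2}$ per pair --- the uniquely owned weight-$2$ edge plus a half share of the edge at $j+2$ --- your local-cluster framework essentially collapses to the paper's argument; as written, the $5$-unit local claim and the exchange contradiction built on it constitute the gap.
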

\begin{proof}
We first show that for any 
$(S_{2i-1},S_{2i})\in A_3$, where $S_{2i-1}=\{j,j+2\}$ or $S_{2i}=\{j,j+2\}$, we either have $\{j-1,j+1\}\in E^2_{\text{swp}}(I^*_{t,2})$ or $\{j+3,j+1\}\in E^2_{\text{swp}}(I^*_{t,2})$. 
Without loss of generality assume that 
$(j,j+1),(j+3,j+1)\in E^2_{\text{pot}}(I^*_{t,2})$. Then if $\{j-1,j+1\}\notin E^2_{\text{swp}}(I^*_{t,2})$, we have either $\{j,j+1\}\in E^2_{\text{swp}}(I^*_{t,2})$ or $\{j+3,j+1\}\in E^2_{\text{swp}}(I^*_{t,2})$ because $\{j+2,j+1\}\notin E^2_{\text{swp}}(I^*_{t,2})$. This contradicts the assumption that $(j,j+1),(j+3,j+1)\in E^2_{\text{pot}}(I^*_{t,2})$. Hence, $\{j-1,j+1\}\in E^2_{\text{swp}}(I^*_{t,2})$.  Note that $\{j-1,j+1\}\in E^2_{\text{swp}}(I^*_{t,2})$ 
is uniquely associated with a pair of companion sets $(S_{2i-1},S_{2i})$. In addition, since $(j,j+1),(j+3,j+1)\in E^2_{\text{pot}}(I^*_{t,2})$ and $\{j,j+2\}=S_{2i-1}$ or $\{j,j+2\}=S_{2i}$, we have either $\{j+2,j+3\}\in E^2_{\text{swp}}(I^*_{t,2})$ or $\{j+2,j+4\}\in E^2_{\text{swp}}(I^*_{t,2})$, which is associated with at most two pairs of companion sets $(S_{2i_1-1},S_{2i_1})$ and $(S_{2i_2-1},S_{2i_2})$ in $A_3$. Hence, the total weight of the edges in $E^2_{\text{swp}}(I^*_{t,2})$ associated with companion sets in $A_3$ is at least $2.5|A_3|$, which implies~\eqref{eq:swpmorethan}. 
\end{proof}
From \eqref{eq:swplessthan} and \eqref{eq:swpmorethan}, we have that
\begin{align*}
    \sum_{j_1\in [4t]}\sum_{j_2\in[4t]}w_{\text{swp}}(j_2,j_1)\ge&\frac{10}{11}\frac{8t-|A_1|-|A_3|-8}{2}+\frac{|A_3|}{11}\\
    =&\frac{5(8t-|A_1|-8)}{11}\\
    \ge &\frac{35t-40}{11},
\end{align*}
which together with Lemma~\ref{lem:optimaldiscrepancyp2} proves the lower bound in Theorem~\ref{thm:p=2}.

\section{An Upper Bound for $p=2$} \label{section:pitwo}

We provide next a construction of companion sets $\{(S_{2i-1},S_{2i})\}_{i\in[t]}$ that achieves $D(S_1,\ldots,S_{2t};I_{t,2};t)=\frac{9t-1}{2}$.  
For $t=2z+1$, $z\ge 1$, consider the following companion sets:
\begin{align}\label{eq:compsestsp2}
    &S_{1}=\{1,5\},~S_2= \{2,4\}\nonumber\\
    &S_{2i-1}=\{3+8(\ell-1),10+8(\ell-1)\},\text{ and}\nonumber\\
    &S_{2i}=\{6+8(\ell-1),7+8(\ell-1)\},~i=2\ell,\ell\in [z]\nonumber\\
    &S_{2i-1}=\{8+8(\ell-1),13+8(\ell-1)\},\text{ and}\nonumber\\
    &S_{2i}=\{9+8(\ell-1),12+8(\ell-1)\},~i=2\ell+1,\ell\in[z-1]\nonumber\\
    &S_{4z+1}=\{8z,8z+4\},~S_{4z+2}= \{8z+1,8z+3\}.
\end{align}
\begin{figure}[t]
\centering
\includegraphics[width=1\linewidth]{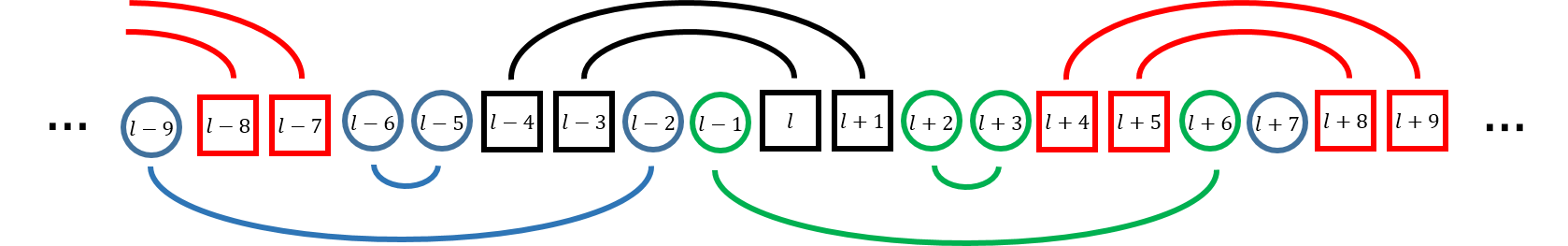}
\caption{An illustration of the recursive partition from~\eqref{eq:compsestsp2}. Integers within the same companion sets have the same color.}
\label{fig:partitionp=2}
\end{figure}
The sets are constructed in a recursive manner (except for $(S_1,S_2)$ and $(S_{4z+1},S_{4z+2})$ that do not follow the same pattern as $(S_3,\ldots,S_{4z}$), and are illustrated in Figure~\ref{fig:partitionp=2}). 
Before presenting the proof of the upper bound, we compare our construction and bounds  with numerical results taken from our earlier work~\cite{pan2022balanced} on the number of optimal defining sets and the corresponding set discrepancy for $p=2$.
\begin{table}[h!]
\caption{Number of optimal partitions and the corresponding set discrepancy for $p=2$}
\label{tab:p=2}
\centering
\begin{tabular}{ |c|c|c|}
 \hline
 Value of $t$& Number of optimal partitions &Optimal set discrepancy\\
 \hline 
 3& 1 &12\\
 \hline
 4& 12 &18\\
 \hline
 5& 7 &22\\
 \hline
\end{tabular}
\end{table}
One can check that our upper  bound is optimal when $t=5$. In addition, our construction \eqref{eq:compsestsp2} is the unique optimal solution for the case when $t=3$.

In what follows, we give an upper bound on the maximum set discrepancy caused by swaps. 
Let $I^*_{z}$ be the set of swaps that causes the maximum total discrepancy in $(S_1,\ldots,S_{4z+2})$ using a minimum number of swaps. Then by similar arguments to those from the proof of Lemma~\ref{lem:optimaldiscrepancyp2}, it can be shown that each swap $(j_1,j_2)\in I^*_z$ contributes $1$ or $2$ to a pair of companion sets $(S_{2i-1},S_{2i})$ where $j_1\in (S_{2i-1}\cup S_{2i})$ or $j_2\in (S_{2i-1}\cup S_{2i})$. 
We use recursion to obtain the upper bound. Note that one of $(1,3)$, $(2,3)$, $(3,4)$, and $(3,5)$ is in $I^*_{S}$ since $3$ can be swapped with one of $1,2,4$, and $5$ to increase the set discrepancy of $(S_3,S_4)$. 
Let $d^2_{z,l}$ denote the maximum discrepancy caused by $I^*_z$ when $(1,3)\in I^*_z$ or $(2,3)\in I^*_z$. Let $d^2_{z,r}$ denote the maximum discrepancy caused by $I^*_z$ when $(3,4)\in I^*_z$ or $(3,5)\in I^*_z$. Then, we have that
\begin{align}\label{eq:d2zrecursion}
    &d^2_{z,l}\le \max\{d^2_{z-1,r}+10,d^2_{z-1,l}+8\}\nonumber\\
    &d^2_{z,r}\le \max\{d^2_{z-1,r}+8,d^2_{z-1,l}+8\},
\end{align}
which implies that
\begin{align}\label{eq:zzminus2}
    \max\{d^2_{z,l},d^2_{z,r}\}\le \max\{d^2_{z-2,r}+18,d^2_{z-2,l}+16\}.  
\end{align}
It can be verified that $d^2_{1,l}=12$ and $d^2_{1,r}=12$ (Table~\ref{tab:p=2}). Therefore, we have that the maximum discrepancy in $(S_1,\ldots,S_{4z+2})$ is given by $\max\{d^2_{z,l},d^2_{z,r}\}\le 9(z-1)+12=9z+3,$ for odd $z,$ and by $\max\{d^2_{z,l},d^2_{z,r}\}\le 9(z-2)+22=9z+4,$ for even $z$. Therefore, the upper bound~\eqref{eq:upperp=2} on $D^*(t,2)$ holds.

\section{Bounds for $|S_i|> 2$ and $p=1$} \label{section:greater_defining}

In this section, we consider the case when the sizes of companion sets $S_i$, $i\in[2t]$, are equal to $q,$ and $q>2$. Note that in this case, $(S_1,\ldots,S_{2t})$ is a partition of $[2tq]$. We show that similar bounds to those in Theorem \ref{thm:main} hold. For the lower bound, the arguments closely resemble those from Section~\ref{sec:lower_bound}. We define graphs $G_{\text{swp}}(I_{t,1})$ and $G_{\text{pot}}(I_{t,1})$ and let $I^*_{t,1}$ be the swap set that causes the largest total discrepancy in $(S_1,\ldots,S_{2t})$ using the minimum number of swaps. Then for $G_{\text{swp}}(I^*)$ and $G_{\text{pot}}(I^*)$, Lemma \ref{lem:optimalswap} and Lemma \ref{lem:removeandadd} hold. Next, we show that Lemma~\ref{lem:type} holds as well. Note that for any pair of companion sets $(S_{2i-1},S_{2i})$ satisfying $\sum(S_{2i-1})=\sum(S_{2i})$, we can find integers $j_1<j_2<\ldots<j_6$ such that $\{j_1,j_1+1,\ldots,j_2\}\subseteq S_{2i-1}$, $\{j_3,j_3+1,\ldots,j_4\}\subseteq S_{2i}$, $\{j_5,j_5+1,\ldots,j_6\}\subseteq S_{2i-1}$ or $\{j_1,j_1+1,\ldots,j_2\}\subseteq S_{2i}$, $\{j_3,j_3+1,\ldots,j_4\}\subseteq S_{2i-1}$, $\{j_5,j_5+1,\ldots,j_6\}\subseteq S_{2i}$. Without loss of generality, suppose that $\{j_1,j_1+1,\ldots,j_2\}\subseteq S_{2i-1}$, $\{j_3,j_3+1,\ldots,j_4\}\subseteq S_{2i}$, $\{j_5,j_5+1,\ldots,j_6\}\subseteq S_{2i-1}$. Then, according to a similar argument to that presented in the proof of Lemma~\ref{lem:type}, we consider two groups of swaps or potential swaps $\{(j_1-1,j_1),(j_4,j_4+1),(j_5-1,j_5)\}$ and $\{(j_2,j_2+1),(j_3-1,j_3),(j_6,j_6+1)\}$, where the swaps in each group can be the same (e.g., $j_4=j_5-1$). Then, we have~\eqref{eq:ddoutgreaterthan3}, and thus Lemma~\ref{lem:type} holds. Matching the remaining arguments in Section~\ref{sec:lower_bound}, we arrive at~\eqref{eq:3t-2} from Lemma~\ref{lem:optimalswap}, Lemma~\ref{lem:type}, and Lemma~\ref{lem:removeandadd}. The lower bound~\eqref{eq:lower} consequently follows from~\eqref{eq:3t-2} and~\eqref{eq:swpgreaterthanpot}.

Next, we show that when $q=2k$ is even, a modification of the construction~\eqref{eq:construction} achieves a worst case total discrepancy upper-bounded by $\frac{8t-2}{5}$. The idea is to expand each integer in $S_i$, $i\in[2t],$ into $k$  consecutive integers. Let $t=5\cdot 2^{z-1}-1$ and $(S^{z+1}_{1},\ldots,S^{z+1}_{5\cdot 2^z-2})$ be the sets defined in~\eqref{eq:construction}. Let
$S^{z+1,q}_{i}=\{j:\lceil \frac{j}{q}\rceil\in S^{z+1}_i\}$ for $i\in [2t]=[5\cdot 2^z-2]$. Let $I^*_{t,z+1}$ be the set of swaps of minimum size that causes the maximum total discrepancy in $(S_1,\ldots,S_{2t})$. Note that $(j,j+1)\notin I^*_{t,z+1}$ for $j\not\equiv 0\text{ or }-1 \bmod k$. Moreover, we create a swap set $I_{t,1}$ from $I^*_{t,z+1}$ by adding $(\lceil\frac{j}{k}\rceil,\lceil\frac{j+1}{k}\rceil)$ to $I_{t,1}$ when $(j,j+1)\in I^*_{t,z+1}$. Hence, each swap $(j,j+1)\in I^*_{t,z+1}$ corresponds to a unique swap $(\lceil\frac{j}{k}\rceil,\lceil\frac{j+1}{k}\rceil)$ to $I_{t,1}$ in $I_{t,1}$ and each swap $(\lceil\frac{j}{k}\rceil,\lceil\frac{j+1}{k}\rceil)$ in $I_{t,1}$ contributes $2$ to the total set discrepancy. By Theorem~\ref{thm:main}, we have that $2|I_{t,1}|\le \frac{8t-2}{5}$. Therefore, the total discrepancy caused by $I^*_{t,z+1}$ is at most $2|I^*_{t,z+1}|=2|I_{t,1}|\le \frac{8t-2}{5}$.

\bibliographystyle{IEEEtran}
\bibliography{biblio1.bib}

\end{document}